\def\vMAFE{n} %n
\vMAFE\documentclass[smallextended] {svjour3}\else %
\title{Mean Field Game of Controls and %
\if t\vMAFE {}\else\\\fi %
An Application To Trade Crowding}
\author{Pierre Cardaliaguet \and Charles-Albert Lehalle}
\institute{Pierre Cardaliaguet  \at 
  CEREMADE, Universit\'e Paris-Dauphine, Place du Marchal de Lattre de Tassigny, 75775 Paris cedex 16 (France)
  \and Charles-Albert Lehalle (corresponding author) \at
  Capital Fund Management (CFM), Paris (France) and Imperial College, London (UK) \email{c.lehalle@imperial.ac.uk} tel: +33 661 439 274.
}
\author{Pierre Cardaliaguet\thanks{CEREMADE, Universit\'e Paris-Dauphine, Place du Marchal de Lattre de Tassigny, 75775 Paris cedex 16 (France)} {~}and Charles-Albert Lehalle\thanks{Capital Fund Management (CFM), Paris (France) and Imperial College, London (UK).}}
\newtheorem{Theorem}{Theorem}[section]
\newtheorem{Proposition}[Theorem]{Proposition}
\newtheorem{StylizedFact}{Stylized Fact}
\newtheorem{Lemma}[Theorem]{Lemma}
\newtheorem{Remark}[Theorem]{Remark}
\def\answ#1{{\color{red}#1}}
\def\answca#1{{\color{red}#1}}
\def\answ#1{#1}
\def\answca#1{#1}
\def\ds{\displaystyle}
\def\Esp{\mathbb{E}}
\def\R{\mathbb{R}}
\def\dive{{\rm div}}
\def\ep{\varepsilon}
\def\calF{{\cal F}}
\begin{document}
\maketitle

\begin{abstract}
  In this paper we formulate the now classical problem of optimal liquidation (or optimal trading) inside a Mean Field Game (MFG). This is a noticeable change since usually mathematical frameworks focus on one large trader \answ{facing} 
%  in front of 
  a ``background noise'' (or ``mean field''). In standard frameworks, the interactions between the large trader and the price are a temporary and a permanent market impact terms, the latter influencing the public price.

In this paper the trader faces the uncertainty of fair price changes too but not only. He \answ{also} has to deal with price changes generated by other similar market participants, impacting the prices permanently too, and acting strategically.

Our MFG formulation of this problem belongs to the class of ``extended MFG'', we hence provide generic results to address these ``MFG of controls'', before solving the one generated by the cost function of optimal trading. We provide a closed form formula of its solution, and address the case of ``heterogenous preferences'' (when each participant has a different risk aversion). Last but not least we give conditions under which participants do not need to instantaneously know the state of the whole system, but can ``learn'' it day after day, observing others' behaviors.
{\if t\vMAFE %
  \keywords{Mean Field Games \and Market Microstructure \and Crowding \and Optimal Liquidation \and Optimal Trading \and Optimal Stochastic Control}
  \noindent{\bf JEL codes} C61 $\cdot$ C73 $\cdot$ G14 $\cdot$ D53 
 \fi}
\end{abstract}

%\tableofcontents

%
\section{Introduction}

Optimal trading (or optimal liquidation) deals with the optimization of a trading path from a given position to zero in a given time interval.
Once a large asset manager \answ{takes}
% took 
 the decision to buy or to sell a large number of shares or contracts, he needs to implement his decision on trading platforms.
He \answ{has}
%needs 
to be fast enough so that the traded price is as close as possible to his decision price, while he needs to take care of his market impact (%i.e.
\answca{market impact is} the way his trading pressure moves the market prices, including his own prices, a detrimental way; see \cite[Chapter 3]{citeulike:12047995} for details).

The academic answers to this need \answ{goes}
%went 
from mean-variance frameworks
(initiated by \cite{OPTEXECAC00}) to more stochastic and liquidity
driven ones (see for instance \cite{citeulike:13300035}). Fine
modeling of the interactions between the price dynamics and the asset
manager trading process is difficult. The reality is probably a
superposition of a continuous time evolving process on the price
formation side and of an impulse control-driven strategy on the asset manager or trader side (see \cite{citeulike:5797837}).

The modeling of market dynamics for an optimal trading framework is sophisticated (\cite{citeulike:13250978} proposes a fine model of the orderbook bid and ask, \cite{ANYA05} \answ{suggests}
%proposes 
a martingale relaxation of the consumed liquidity, and \cite{citeulike:13125577} uses an Hawkes process, among others).
Part of the sophistication comes from the fact ``market dynamics'' are in reality the aggregation of the behaviour of other asset managers, buying or selling thanks to similar optimal schemes, behind the scene.
The usual framework for optimal trading is nevertheless the one of one
large privileged asset manager, facing a ``mean field'' (or a ``background noise'') made of the sum of the behaviour of other market participants.

An answer to the uncertainty on the components of an appropriate model of market dynamics (and especially of the ``market impact'', see empirical studies like \cite{citeulike:13497373} or \cite{citeulike:4325901} for details) could be to implement a robust control framework. Usually robust control consists in introducing an adversarial player in front of the optimal strategy, implementing systematically the worst decision (inside a well-defined domain) for the optimal player \answca{(see \cite{bernhard2003robust} for an extreme framework)}. Because of the adversarial player, one can hope \answ{that}  the uncertainty around the modeling cannot be used at his advantage by the optimal player.
To authors' knowledge, it has not been proposed for optimal trading by now.

Since we know the structure of market dynamics at our time scale of interest (i.e. a mix of players implementing similar strategies), another way to obtain robust result\answca{s} is to model directly this game, instead of the superposition of an approximate mean field and a synthetic sequences of adversarial decisions.

%\emph{Commentaire: \`a Pierre de faire une intro adapt\'ee sur les MFG, avec un focus sur ce MFG.}

Our work takes place within the framework of Mean Field Games (MFG).  Mean Field Game theory studies optimal control problem\answca{s} with infinitely many interacting agents. The solution of the problem is an equilibrium configuration in which no agent has interest to deviate (a Nash equilibrium). The terminology and many substantial ideas were introduced in the seminal papers by Lasry and Lions \cite{citeulike:9313116,citeulike:9313152,citeulike:3614137}. Similar models were discussed at the same time by Caines, Huang and Malham\'e (see, e.g., \cite{huang2006large}), who computed explicit solutions for the linear-quadratic (LQ) case (see also \cite{bensoussan2016linear}): we use these techniques in our specific framework. Applications to economics and finance were first developed in   \cite{gueant2011mean}. Since these pioneering works the literature on MFG has grown very fast: see for instance the monographs or the survey papers \cite{bensoussan2013mean,caines2015mean,gomes2014mean}.

The MFG system considered here for the optimal trading differs in a substantial way from the standard ones by the fact that the mean field is not, as usual, on the position of the agents, but on their controls: this specificity is dictated by the model. 
Similar---more general---MFG systems were introduced in \cite{gomes2014existence} under the terminology of {\it extended} MFG models.
 In \cite{gomes2014existence,gomes2016extended}, existence of solutions is proved for deterministic MFG under suitable structure assumptions. A chapter in the monograph \cite{carmonadelaruebook}  is also devoted to this class of MFG, with a probabilistic  point of view. Since our problem naturally involves a degenerate diffusion, we provide a new and general existence result of a solution in this framework. Our approach is related with techniques for standard MFG of first order with smoothing coupling functions (see \cite{citeulike:3614137} or \cite{cardaliaguet2015learning}). 

As the MFG system is an equilibrium configuration---in which theoretically each agent has to know how the other agents are going to play in order to act optimally---it is important to explain how such a configuration can pop up in practice. This issue is called ``learning" in game theory and has been the aim of a huge literature (see, for instance, the monograph \cite{fudenberg1998theory}). The key idea is that an equilibrium configuration appears---without coordination of the agents---because the game has been played sufficiently many times. {%\color{red}
 In MFG theory, the closely related concept of adaptative control was implemented, for infinite horizon problems, in \cite{KizilkaleCaines,nourian2012mean}. The first explicit reference} to learning in MFG theory can be found in \cite{cardaliaguet2015learning}. This idea seems very meaningful for our optimal trading problem where the asset managers are lead to buy or sell daily a large number of shares or contracts. We implement the learning procedure within the simple LQ-framework: we show that, when the game has been repeated a sufficiently large number of times, the agents---without coordination and subject to measurement errors---actually implement trading strategies which are close to the one suggested by the MFG theory. 

% -----

The starting point of this paper is an \emph{optimal liquidation problem}: a trader has to buy or sell (we formulate the problem for a sell) a large amount of shares or contracts in a given interval of time (from $t=0$ to $t=T$). The reader can think about $T$ as typically being a day or a week. Like in most optimal liquidation (or optimal trading) problems, the utility function of the trader \answ{has}
%will have 
three components: the state of the cash account at terminal time $T$ (i.e. the more money the trader obtained from the large sell, the better), a liquidation value for the remaining inventory (with a penalization corresponding to a large market impact for this large ``block''), and a risk aversion term (corresponding to not trading instantaneously at his decision price). As usual again, the price dynamics will be influenced by the sells of the trader (via a \emph{permanent market impact} term: the faster \answ{the trader} trades, the more he impacts the price \answ{in} a detrimental way); on the top of this cost, the trader will suffer from \emph{temporary market impact}: this will not change the public price but his own price (the reader can think about the ``cost of liquidity'', like a bid-ask spread cost). In their seminal paper \cite{OPTEXECAC00}, Almgren and Chriss noticed such a framework gives birth to an interesting optimization problem for the trader: on the one hand if he trades too fast he will suffer from market impact and liquidity costs on his price, but on the other hand if he trades too slow, he will suffer from a large risk penalization (the ``fair price'' will have time to change a detrimental way). Once expressed in a dynamic (i.e. time dependent) way, this optimization turns into a stochastic control problem.
For details about this standard framework, see \cite{cartea15book}, \cite{olivier16book} or \cite[Chapter 3]{citeulike:12047995}.

In the standard framework, the  trader faces a \emph{mean field}: a Brownian motion he is the only one to influence via his permanent market impact. \answ{This will no longer be the case in this paper:}
%In this paper, it will no more be the case: 
on the top of a Brownian motion (corresponding to the unexpected variations of the \emph{fair price} while market participants are trading), we  add the consequences of the actions of a \emph{continuum of market participants}. Each \answ{participant has}
%of them will have 
to buy or sell a number of shares or contracts $q$ (positive for sellers and negative for buyers).
% too, from $0$ to $T$. 
 This continuum is characterized by the density of the remaining inventory of participants $dm(t,q)$. A variable of paramount importance is the net inventory of all participants: $E(t):=\int_q q \, m(t,q)\, dq$. 

To Authors' knowledge, only two papers are related to \answ{our approach:} 
%this: 
one by Carmona et al., using MFG for fire sales \cite{citeulike:13641571}, and one by Jaimungal and Nourian \cite{citeulike:13586166} for optimal liquidation of one large trader in front of smaller ones. 
They are nevertheless different \answ{from}
%to 
ours: the first one (fire sales) has not the same cost function \answ{as in}
%than 
optimal liquidation \answ{problems}, and the second one \answ{investigates the behavior of}
%makes the difference between 
a large trader (having to sell substantially more shares or contracts than the others, with a risk aversion) 
facing \emph{a crowd of small traders} (with a lot of small inventories and no risk aversion). The topic of this second paper is more the one of a large asset manager trading in front of high frequency traders.

%It means in 
\answ{In} this paper \answ{we assume that} the public price \answ{is}
%will be equally 
influenced by the permanent market impact of all market participants.
% and  that
%each of them will 
\answ{Note that, conversely, all market participants face the public price thus affected.}
% suffer from this. 
 It corresponds to the day-to-day reality of traders: it is not a good news to buy while other participants are buying, but it is good to have to buy while others are selling. 
%Hence in this paper 
\answ{In such a configuration, the} participants 
%can try to 
act strategically, taking into account all the information they have. \answ{As explained in Section \ref{sec:optrade}, this leads to a Nash equilibrium of MFG type, in which the mean field depends on the agents' actions. This Nash MFG equilibrium can be summarized by a system  of forward-backward PDEs (see \eqref{eq:master}), coupling a (backward) Hamilton-Jacobi equation with a (forward) Kolmogorov equation. When the preferences of the agents are  homogeneous, this system can be solved explicitly and displays interesting---and not completely intuitive---features  (Section \ref{sec:identical}). For instance, one can notice that the coefficient affecting the permanent market impact has a strong influence on the whole system: the highest the coefficient, the fastest the market players have to drive their inventory to zero. Another interesting situation is when a participant is close to a zero inventory (or for low terminal constraints and low risk aversion). It can then be rewarding to ``follow the crowd'': a seller may have interest to buy for a while. These qualitative conclusions are summarized }
%for less technical readers 
as ``Stylized Facts'' at the end of Section \ref{sec:identical}. 

\answ{The reader might object that an equilibrium configuration as described in Section  \ref{sec:optrade} or \ref{sec:identical}
is unlikely to be observed  because, in practice, the market participants compute their optimal strategy for  an optimal control problem, and not for a game. We address this issue in Section \ref{sec:equilearn}, in  the case where each participant has his own risk aversion (i.e. a case of \emph{heterogenous preferences}, in game theory terminology). We first discuss the existence and the uniqueness of a MFG Nash equilibrium in this more general framework. Then we model---in a slightly more realistic way---the day-after-day behavior of the market participants: we explain that, as % the 
market participants observe a (possibly) noisy measurement of the daily net trading speed of all investors,  they can try to derive from their past observations an approximation of the permanent market impact for the next day and compute their optimal strategy accordingly.  We show that, doing so, \emph{they end up playing  a Nash MFG equilibrium}. Let us underline that, in our model, the market participants do not have  access to the distribution of the trading positions of the other participants; they do not necessarily have the same estimate of the permanent market impact; they are not even  aware that they are  ``playing a game". Nevertheless, the configuration after stabilization is an MFG equilibrium.} 
%will  model the optimal behavior of traders if they observe this repeating game \emph{each day} (i.e. when $T$ is a day), and adjust their behaviour day after day taking into account what they learnt from others' past behaviors. 
%If in the first part each participant will have the same risk aversion,  Section \ref{sec:equilearn} will show how to address the case of each participant having his own risk aversion (i.e. a case of \emph{heterogenous preferences}, in game theory terminology).

Since we had to develop our own MFG tools to address this case (the mean field involving the \emph{controls of the agents}, not in their state), the last part of this paper (Section \ref{sec:generic})  addresses  this kind of MFG systems in a very generic way. The reader will hence find in this Section tools to handle such problems, with generic cost functions and not only with the ones of the usual optimal liquidation problem. \answ{Note that we only address here the question of well-posedness of the game with infinitely many agents: the application to games with a finite number of players, as well as the learning procedures will be developed in future works.}
\\

{\bf Acknowledgement:} Authors thank Marc Abeille for his careful reading of Section \ref{sec:redone}. The first author was partially supported by the ANR (Agence Nationale de la Recherche) projects ANR-14-ACHN-0030-01
and ANR-16-CE40-0015-01.

\section{Trading Optimally Within The Crowd}
\label{sec:optrade}

\subsection{Modeling a Mean Field of Optimal Liquidations}\label{subsec:model}

A continuum of investors indexed by $a$ decide to buy or sell a given tradable instrument.
The decision is a signed quantity $Q^a_0$ to buy (in such a case $Q^a_0$ is negative: the investor has a negative inventory at the initial time $t=0$) or to sell (when $Q^a_0$ is positive).
All investors have to buy or sell before a given terminal time $T$, each of them will nevertheless potentially trade faster or slower since each of them will be submitted to % a 
different risk aversion parameters $\phi^a$ and $A^a$. The distribution of the risk aversion parameters is independent to anything else.

Each investor will control its trading speed $\nu^a_t$ through time, in order to fullfil its goal.
The price $S_t$ of the tradable instrument is submitted to two kinds of move\answca{s}: an exogenous innovation supported by a standard Wiener process $W_t$ (with its natural probability space and the associated filtration $\calF_t$), and the \emph{permanent market impact} generated linearly from the buying or selling pressure $\alpha \mu_t$ where $\mu_t$ is the net sum of the trading speed of all investors (like in \cite{citeulike:13587586}, but in our case $\mu_t$ is endogenous where it is exogenous in their case) and $\alpha>0$ is a fixed parameter.
\begin{equation}
  \label{eq:price}
  dS_t = \alpha \mu_t\, dt + \sigma\, dW_t.
\end{equation}

The state of each investor is described by two variables: its inventory $Q^a_t$ and its wealth $X^a_t$ (starting with $X^a_0=0$ for all investors). The evolution of $Q^a$ reads
\begin{equation}
  \label{eq:qty}
  dQ^a_t = \nu^a_t\, dt,
\end{equation}
since for a seller, $Q^a_0>0$ (the associated control $\nu^a$ will be mostly negative)
and the wealth suffers from linear trading costs (or \emph{temporary}, or \emph{immediate  market impact}\answca{, parametrized by $\kappa$}):
\begin{equation}
  \label{eq:wealth}
  dX^a_t = -\nu^a_t (S_t + \kappa \cdot\nu^a_t)\, dt.
\end{equation}
Meaning the wealth of a seller will be positive (and the faster you sell --i.e. $\nu^a$ is largely negative--, the smaller the sell price).

The cost function of investor $a$ is similar to the ones used in \cite{cartea15book}: it is made of the wealth at $T$, plus the value of the inventory penalized by a terminal market impact, and minus a running cost quadratic in the inventory:
\begin{equation}
  \label{eq:Value}
  V^{a}_t:=\sup_\nu \Esp\left( X^a_T + Q^a_T (S_T - A^a\cdot Q^a_T) - \phi^a \int_{t}^T (Q^a_s)^2\, ds \right).
\end{equation}
\answca{We use this cost function by purpose: a lot of efforts have been made around this utility function by Cartea, Jaimungal and their different co-authors to show it can emulate (provided some changes) most of costs functions provided by brokers to dealing desks of asset managers. This specific one emulates an \emph{Implementation Shortfall} algorithms, while it can be changed to emulate a \emph{Percentage of Volume} or a \emph{Volume Weighted Average Price} (see \cite[Capter 3]{citeulike:12047995} for a list of common trading algorithms). Such changes in the cost function would nevertheless impact the paper and demand some complementary work.}

%
%\subsection{Solving the optimal control problem of one investor}
\subsection{The Mean Field Game system of Controls}

The Hamilton-Jacobi-Bellman associated to (\ref{eq:Value}) is
\begin{equation}
  \label{eq:HJB}
  0 = \partial_t V^a - \phi^a\, q^2 + \frac{1}{2}\sigma^2\partial^2_S V^a +\alpha \mu \partial_S V^a+  \sup_\nu \left\{   \nu \partial_q V^a - \nu (s+\kappa\, \nu)\partial_X V^a \right\}
\end{equation}
with terminal condition 
$$
V^a(T,x,s,q;\mu)= x+q(s-A^aq).
$$ 

Following the Cartea and Jaimungal's approach, we will use the following ersatz: 
\begin{equation}
  \label{eq:ersatz}
  V^a=x+q s + v^a(t,q;\mu).
\end{equation}
Thus the HJB on $v$ is
$$-\alpha\mu\, q = \partial_t v^a - \phi^a\, q^2 + \sup_\nu \left\{ \nu \partial_q v^a -\kappa\, \nu^2 \right\}$$
with terminal condition
$$
v^a(T,q;\mu)= -A^aq^2
$$
and the associated optimal feedback is
\begin{equation}
  \label{eq:optnu}
  \nu^a(t,q) = \frac{\partial_q v^a(t,q)}{2\kappa}.
\end{equation}

\paragraph{Defining the mean field.}
The expression of the optimal control (i.e. trading speed) of each investor shows that the important parameter for each investor is its current inventory $Q^a_t$. The mean field of this framework is hence the distribution $m(t,dq,da)$ of the inventories of investors and of their preferences. Its initial distribution is fully specified by the initial targets of investors and the distribution of the $(\phi^a, A^a)$.

It is then straightforward to write the net trading flow $\mu$ at any time $t$
\begin{equation}
  \label{eq:mudyn}
  \mu_t = \int_{(q,a)} \nu^a_t(q)\, m(t,dq,da) = \int_{q,a} \frac{\partial_q v^a(t,q)}{2\kappa} \, m(t,dq,da).
\end{equation}
Note that implicitly $v^a$ is a function of $\mu$, meaning we will have a fixed point problem to solve in $\mu$.

We now write the % the 
evolution of the density $m(t,dq,da)$ of $(Q_t^a)$. By the dynamics (\ref{eq:qty}) of $Q_t^a$, we have 
$$
\partial_t m+\partial_q \left( m \frac{\partial_q v^a}{2\kappa}\right)=0
$$
with initial condition $m_0=m_0(dq,da)$ (recall that the preference $(\phi^a, A^a)$ of an agent $a$ is fixed during the period). 

\paragraph{The full system.} Collecting the above equations we find our twofolds mean field game system made of the backward PDE on $v$ coupled with the forward transport equation of $m$: 
\begin{equation}
  \label{eq:master}
\left\{  \begin{array}{rcl}
\displaystyle -\alpha  q\,\mu_t &=& \displaystyle\partial_t v^a - \phi^a\, q^2 + \frac{(\partial_q v^a)^2}{4\kappa}\\
0 &= & \partial_t m+\partial_q \left( m \frac{\partial_q v^a}{2\kappa}\right)\\
\mu_t & = &\ds  \int_{(q,a)} \frac{\partial_q v^a(t,q)}{2\kappa} \, m(t,dq,da) 
  \end{array} \right.
\end{equation}
The system is complemented with the initial condition (for $m$) and terminal condition (for $v$): 
$$
m(0,dq,da)=m_0(dq,da), \qquad v^a(T,q;\mu)= -A^aq^2.
$$

\answ{The above system is interpreted as a Nash equilibrium configuration in a game with infinitely many market participants: a (small) market participant, anticipating the net trading flow $(\mu_t)$, computes his optimal strategy by solving an optimal control problem which, after simplification, leads to the  equation for $v^a$ coupled with the terminal condition. When all market participants trade optimally, the distribution $m$ of the inventories and preferences evolves according to the second equation, complemented with the initial for $m$. Then one derives the net trading flow $(\mu_t)$ % in 
\answca{as a} function of $m$ and $v^a$ through the third equation.  }

\section{Trade crowding with identical preferences}\label{sec:identical}

In this section, we suppose that all agents have identical preferences: $\phi^a=\phi$ and $A^a=A$ for all $a$. The main advantage of this assumption is that it leads to explicit solutions. 

\subsection{The system in the case of identical preferences} To simplify notation, we omit the parameter % parameter 
$a$ in all expressions. \answ{We aim at solving \eqref{eq:master} (in which $a$ is omitted). }
% becomes 
%$$\left\{  \begin{array}{rcl}
%\displaystyle -\alpha  q\,\int_{q} \frac{\partial_q v(t,q)}{2\kappa} \, m(t,q)dq &=& \displaystyle\partial_t v - \phi\, q^2 + \frac{(\partial_q v)^2}{4\kappa}\\
%\partial_t m+\partial_q \left( m \frac{\partial_q v}{2\kappa}\right) &= & 0
%  \end{array} \right.
%$$
%with the initial and terminal conditions: 
%$$
%m(0,dq)=m_0(dq), \qquad v(T,q;\mu)= -Aq^2.
%$$
It is convenient to set $\displaystyle E(t)= \Esp\left[Q_t\right]= \int_q qm(t,dq)$. Note that 
$$
E'(t)= \int_q q\partial_t m(t,dq),
$$
so that, using the equation for $m$ and an integration by parts: 
\begin{equation}
  \label{eq:mdyn}
  E'(t)=  -\int_q q\partial_q \left( m(t,q) \frac{\partial_q v(t,q)}{2\kappa}\right)dq= \int_q  \frac{\partial_q v(t,q)}{2\kappa}m(t,dq).
\end{equation}

\subsection{Quadratic Value Functions}
%{Going a little bit further.}

When $v(t,q)$ can be expressed as a quadratic function of $q$:
$$v(t,q) = h_0(t) + q\, h_1(t) - q^2 \, \frac{h_2(t)}{2},$$
then the backward part of the master equation can be split in three parts
\begin{equation}
 \label{eq:lqe1}
 \begin{array}{rcl}
  0 &=& \displaystyle - 2\kappa h_2'(t) - 4\kappa \phi+ (h_2(t))^2,\\
  \displaystyle 2\kappa\alpha\mu(t) &=&  \displaystyle  - 2\kappa h_1'(t) + h_1(t)h_2(t),\\
  \displaystyle -{(h_1(t))^2} &=& \displaystyle {4\kappa} h_0'(t) ,
\end{array}
\end{equation}
%which can be solved iteratively.
One also has to add the terminal condition: as $V_T= x+q(s-Aq)$, $v(T,q)= -Aq^2$. This implies that 
\begin{equation}\label{eq:lqe1IC}
h_0(T)=h_1(T)=0, \; h_2(T)= 2A.
\end{equation}

\paragraph{Dynamics of the mean field.} Recalling \eqref{eq:mudyn}, we have 
\begin{equation}\label{eq:mumu}
  \mu(t) =  \int_q \frac{\partial_q v(t,q)}{2\kappa} \, dm(q)= \int_q \frac{h_1(t)-qh_2(t)}{2\kappa} \, dm(q)=  \frac{h_1(t)}{2\kappa} - \frac{h_2(t)}{2\kappa}E(t).
  \end{equation}
Moreover, by (\ref{eq:mdyn}), we also have %  $\partial_q v(t,q)= h_1(t)+h_2(t)q$ so that 
$$
 E'(t)= \int_q m(t,q) \left( \frac{h_1(t)}{2\kappa} - \frac{h_2(t)}{2\kappa}q \right)dq = \frac{h_1(t)}{2\kappa} -  \frac{h_2(t)}{2\kappa}E(t).
$$
So we can supplement \eqref{eq:lqe1} with 
\begin{equation}\label{eq:lqe2}
 %E'(t)=\frac{h_1(t)}{2\kappa}- \frac{h_2(t)}{2\kappa}E(t).
2\kappa E'(t) =  h_1(t)  - E(t) \cdot h_2(t).
\end{equation}

\paragraph{Summary of the system.} We now collect all the equations. Recalling \eqref{eq:mumu}, 
we find: 
\begin{subequations}
\begin{empheq}[left=\left\{,right=\right.]{align}
    \label{eq:dh2} \displaystyle 4\kappa \phi &= -2\kappa h_2'(t)+ (h_2(t))^2,\\ 
    \label{eq:dh1} \rule{3ex}{0px}\displaystyle \alpha h_2(t)  E(t)&=  2\kappa h_1'(t)+ h_1(t) \left(\alpha - h_2(t)\right),\\
    \label{eq:dh0} \displaystyle -{(h_1(t))^2} &= {4\kappa}\displaystyle  h_0' (t),\\
    \label{eq:dEq} \displaystyle    2\kappa E'(t) &=  h_1(t)  - h_2(t)E(t). 
    % E'(t) &= \frac{h_1(t)}{2\kappa}- \frac{h_2(t)}{2\kappa}E,
  \end{empheq}
% %\left\{ 
% \begin{array}{l}
%   \displaystyle - \frac{\partial_t h_2}{2}- \phi+ \frac{1}{2\kappa} h^2_2=0,\\
%   \\
%   \displaystyle \frac{h_2E}{2\kappa} = \partial_t h_1+h_1 \frac{1-h_2}{2\kappa},\\
%  \\
%   \displaystyle -\frac{h_1^2}{2\kappa} = \displaystyle \partial_t h_0 ,\\
%   \\
% \displaystyle  \partial_t E = -\frac{h_1}{2\kappa}+ \frac{h_2}{2\kappa}E.
% \end{array}%\right.
\end{subequations}
with the boundary conditions 
$$
h_0(T)=h_1(T)=0, \; h_2(T)= 2A, \; E(0)=E_0,
$$
where $E_0=\int_q qm_0(q)dq$ is the net initial inventory of market participants (i.e. the expectation of the initial density $m$).

\subsubsection{Reduction to a single equation} 
\label{sec:redone}

From now on we consider $h_2$ as given and derive an equation satisfied by $E$. By \eqref{eq:dEq}, we have 
$$
h_1= 2\kappa  E' + h_2E,
$$
so that 
\begin{equation}
  \label{eq:hEh}
   h_1'= 2\kappa E'' + E h_2'+ h_2 E'.
\end{equation}

Plugging these expressions into \eqref{eq:dh1}, we obtain 
$$
\begin{array}{rl}
\ds 0 \; = & \ds   h_1'+h_1 \frac{\alpha - h_2}{2\kappa} - \frac{\alpha h_2}{2\kappa} E, \\
%h_1'+h_1 \frac{\alpha + h_2}{2\kappa}- \alpha\frac{h_2}{2\kappa} E\\
= & \ds  2\kappa E'' + E h_2'+ h_2 E'+\left(2\kappa E' + h_2E\right)  \frac{\alpha - h_2}{2\kappa}- \alpha\frac{h_2}{2\kappa} E\\
= &\ds 2\kappa E'' + \alpha E'  + 2 E \left(\frac{1}{2} h_2' -\frac{(h_2)^2}{4\kappa} \right).
\end{array}
$$
Recalling \eqref{eq:dh2} we find
$$
0= 2\kappa E'' + \alpha E'  -2\phi E.
$$
The boundary conditions are $E(0)=E_0$, $h_2(T)=2A$, $h_1(T)=0$, where the last expression can be rewritten by taking \eqref{eq:dEq} into account.
To summarize, the equation satisfied by $E$ is: 
\begin{equation}\label{eq:systE}
\left\{\begin{array}{l}
\displaystyle 0=2\kappa E'' (t)+ \alpha E' (t) -2\phi E(t)\qquad {\rm for}\; t\in (0,T),\\
\displaystyle E(0)=E_0, \qquad \kappa E'(T) + AE(T)=0.
\end{array}\right.
\end{equation}

\subsubsection{Solving \eqref{eq:systE}}

After some easy but tedious computation explained in Appendix \ref{sec:app:prop:E}, one finds: 

\begin{Proposition}[Closed form for the net inventory dynamics $E(t)$]
\label{prop:E}
For any $\alpha\in \R$, the problem \eqref{eq:systE} has a unique solution $E$,  given by 
$$
E(t)= E_0 a \left( \exp\{r_+t\}-\exp\{r_-t\}\right)+ E_0\exp\{r_-t\}
$$
where $a$ is given by
$$
a= \frac{(\alpha/4+\kappa\theta-A) \exp\{-\theta T\} }{-\frac{\alpha}{2}{\rm sh}\{\theta T\} +2\kappa\theta{\rm ch}\{\theta T\}+2A {\rm sh}\{\theta T\}},
$$
the denominator being positive and the constants $r^\pm_\alpha$ and $\theta$ being given by 
$$
r_\pm := -\frac{\alpha}{4\kappa} \pm \theta, \qquad \theta:=\frac{1}{\kappa} \sqrt{\kappa\phi+\frac{\alpha^2}{16}}.
$$ 
Moreover, 
\begin{equation}
  \label{eq:h2:gen}
  h_2(t) = 2\sqrt{\kappa\phi}\frac{1+c_2e^{rt}}{1-c_2e^{rt}},
\end{equation}
where $r=2\sqrt{\phi/\kappa}$ and 
$$c_2=\answca{-}\frac{1 - A/\sqrt{\kappa \phi}}{1 + A/\sqrt{\kappa\phi}} \cdot e^{-rT}.$$
\end{Proposition}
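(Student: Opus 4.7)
The plan is to treat the $E$-equation and the $h_2$-equation separately, since, by Subsection \ref{sec:redone}, the decoupling has already been achieved: \eqref{eq:systE} is a closed linear second-order ODE for $E$ with mixed boundary data, and the first line of \eqref{eq:lqe1} together with \eqref{eq:lqe1IC} is a scalar Riccati equation for $h_2$ with terminal data.

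\textbf{Step 1: solving for $E$.} I would first observe that \eqref{eq:systE} is a linear ODE with constant coefficients, whose characteristic polynomial $2\kappa r^2+\alpha r-2\phi=0$ has discriminant $\alpha^2+16\kappa\phi>0$ and two distinct real roots, which a direct computation identifies with $r_\pm=-\alpha/(4\kappa)\pm\theta$ for $\theta=\kappa^{-1}\sqrt{\kappa\phi+\alpha^2/16}$. The general solution is $E(t)=C_+e^{r_+t}+C_-e^{r_-t}$. Writing the proposed form as $E(t)=E_0\,a\,e^{r_+t}+E_0(1-a)e^{r_-t}$, the initial condition $E(0)=E_0$ is satisfied automatically, and the terminal condition $\kappa E'(T)+AE(T)=0$ becomes
\[
a\bigl[(\kappa r_++A)e^{r_+T}-(\kappa r_-+A)e^{r_-T}\bigr]=-(\kappa r_-+A)e^{r_-T}.
\]
Using $\kappa r_\pm +A=A-\alpha/4\pm\kappa\theta$ and factoring $e^{-\alpha T/(4\kappa)}$ to convert $e^{\pm\theta T}$ into hyperbolic functions, one recognizes the numerator $(\alpha/4+\kappa\theta-A)e^{-\theta T}$ and the denominator $-(\alpha/2)\sinh(\theta T)+2\kappa\theta\cosh(\theta T)+2A\sinh(\theta T)$ announced in the statement.

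\textbf{Step 2: positivity of the denominator.} The denominator equals $2\kappa\theta\cosh(\theta T)+(2A-\alpha/2)\sinh(\theta T)$. If $2A\geq\alpha/2$ the conclusion is immediate since $A,\kappa,\theta,T\geq 0$. Otherwise, because $2\kappa\theta=2\sqrt{\kappa\phi+\alpha^2/16}>|\alpha|/2\geq \alpha/2-2A$, the ratio $2\kappa\theta/(\alpha/2-2A)$ exceeds $1\geq\tanh(\theta T)$, and the denominator is again positive. This guarantees $a$ is well-defined for every $\alpha\in\R$, which also yields uniqueness.

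\textbf{Step 3: solving for $h_2$.} The Riccati equation $2\kappa h_2'=(h_2)^2-4\kappa\phi$ is separable; setting $p:=2\sqrt{\kappa\phi}$ and using the partial fraction decomposition
\[
\frac{2\kappa}{(h_2-p)(h_2+p)}=\frac{\kappa}{p}\Bigl(\frac{1}{h_2-p}-\frac{1}{h_2+p}\Bigr),
\]
and noting $p/\kappa=r=2\sqrt{\phi/\kappa}$, integration gives $(h_2-p)/(h_2+p)=c_2 e^{rt}$ for some constant $c_2$. Solving algebraically for $h_2$ yields the formula \eqref{eq:h2:gen}. Imposing $h_2(T)=2A$ and writing $\beta:=A/\sqrt{\kappa\phi}$, one obtains $c_2 e^{rT}=(\beta-1)/(\beta+1)$, which rearranges to the claimed expression for $c_2$.

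The computations are routine; the only nontrivial substantive step is the positivity of the denominator in Step 2, which is what makes the formula for $a$ meaningful for every $\alpha$ and ensures uniqueness of the solution to \eqref{eq:systE}.
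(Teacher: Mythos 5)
Your proposal is correct and follows essentially the same route as the paper's Appendix \ref{sec:app:prop:E}: identify the characteristic roots $r_\pm$, parametrize the general solution so that $E(0)=E_0$ holds automatically, determine $a$ from the mixed terminal condition after converting to hyperbolic functions, verify the denominator is positive (your $\tanh$-comparison is an equivalent rearrangement of the paper's bound ${\rm ch}\{\theta T\}>{\rm sh}\{\theta T\}$ combined with $2\kappa\theta\geq\alpha/2$), and solve the Riccati equation for $h_2$ separately by separation of variables with the terminal condition fixing $c_2$ (including the minus sign, consistent with the corrected statement).
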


\begin{Remark}{\rm The last needed component to obtain the optimal control using \eqref{eq:optnu} is $h_1(t)$. Thanks to \eqref{eq:hEh}, it can be easily written from $E$, $E'$ and $h_2$ (note $h_2$ is mostly negative for our sell order): 
$$h_1(t) = 2\kappa\cdot E'(t) + h_2(t)\cdot E(t).$$

This gives an explicit formula for the optimal control for any value of the parameters:
$\kappa$ (the instantaneous market impact), $\phi$ (the risk aversion), $\alpha$ (the permanent market impact), $A$ (the terminal penalization), $E_0$ (the initial net position of all participants), and $T$ (the duration of the ``game'').
}
\end{Remark}

\begin{figure}[!ht]
  \centering
  \includegraphics[width=\textwidth]{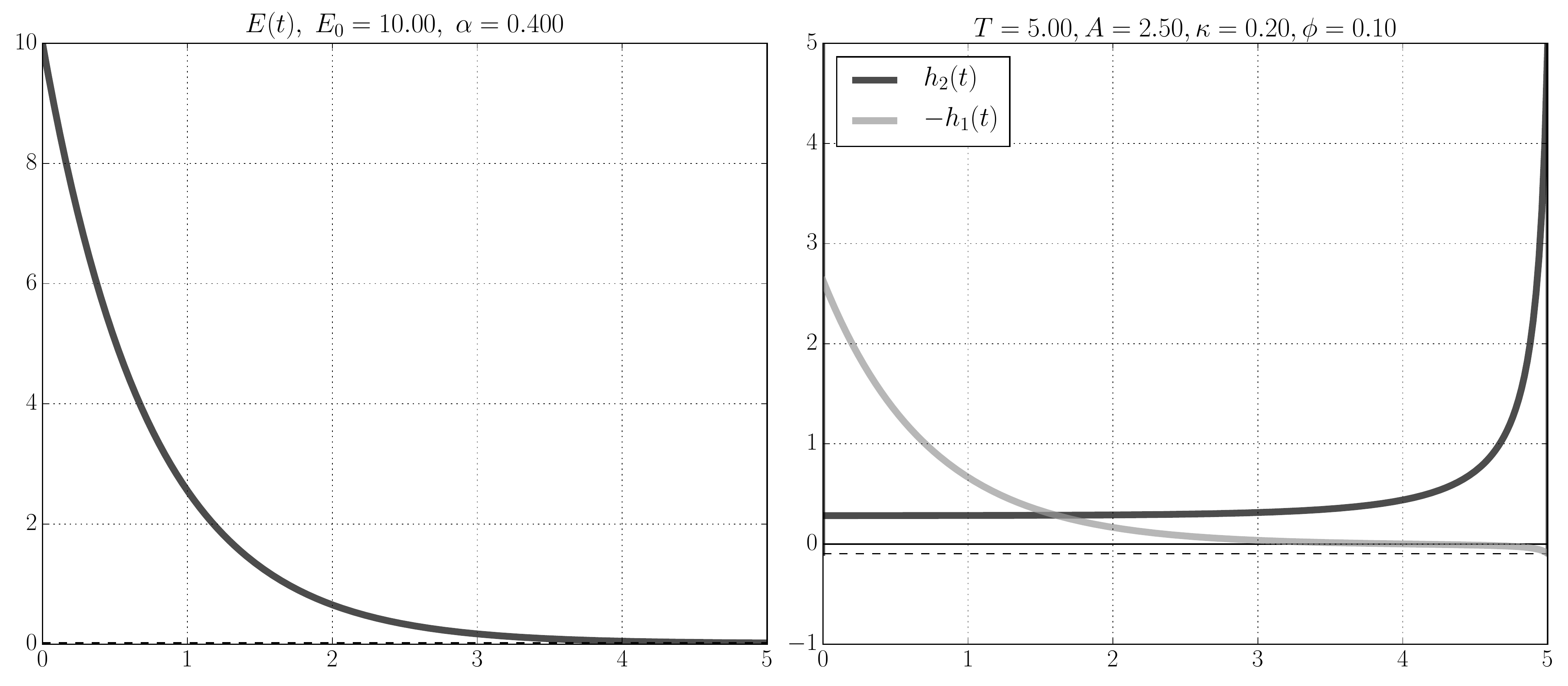}
  \caption{Dynamics of $E$ (left) and $-h_1$ and $h_2$ (right) for a standard set of parameters: %
    $\alpha=0.4$, $\kappa=0.2$, $\phi=0.1$, $A=2.5$, $T=5$, $E_0=10$.}
  \label{fig:stdhEh}
\end{figure}

\paragraph{Typical Dynamics.}
Figure \ref{fig:stdhEh} shows the joint dynamics of $E$ (left panel), $-h_1$ and $h_2$ (right panel) for typical values of the parameters: $\alpha=0.4$, $\kappa=0.2$, $\phi=0.1$, $A=2.5$, $T=5$, $E_0=10$.
As expected, $E(t)$ goes very close to 0 at $t=T$; in our reference case $E(T)=0.02$. 
  Looking carefully at Proposition \ref{prop:E}, it can be seen the main component driving $E(T)$ to zero is $\exp r^\alpha_- T$, where $r^\alpha_-=-\alpha / 4 - (\sqrt{\kappa\phi + \alpha^2/16})/\kappa$, hence:
\begin{itemize}
\item The best way to obtain a terminal net inventory of zero is to have a large $\alpha$, or a large $\phi$, or a small $\kappa$. Surprisingly, having a large $A$ does not help that much. It mainly urges the trading very close to $t=T$ when the other parameters decrease $E$ earlier.
\item $h_2$ increases slowly to $2A$, while $h_1$ goes from a negative value to a slightly positive one.
\end{itemize}

To understand the respective roles of $h_2$ and $h_1$, one should keep in mind the optimal control is $(h_1(t) - q h_2(t))/(2\kappa)$. Having a negative $h_1$ increases the trading speed \answca{of a seller}. That's why we draw $-h_1$ instead of $h_1$ on all the figures.

\answca{Since participants influences themselves via $\alpha\mu_t$ the permanent market impact coefficient times the sum of their controls (that are functions of the mean field), one can consider the lower $\alpha$, the more ``disconnected'' players from the influence of the mean field.}

\answca{\begin{StylizedFact}[Influence of the mean field varies with $\alpha$]\label{prop:smallalpha}
Figure \ref{fig:E:small:h1} compares the components of the optimal strategies for two values of $\alpha$ (the strength of the influence of the players one on each others): 
when players are less connected: $h_2(t)$ does not change and $h_1(t)$ is smaller, except at the end of the trading.
\end{StylizedFact}}
% It can be checked on Figure \ref{fig:E:small:h1} a way to obtain a less negative initial value for $h_1$ is to lower the initial slope of $E$: take a smaller $\alpha$. In such a case the net inventory of all players goes slower to zero and therefore $|h_1(t)|$ is smaller around $t=0$ but larger around $T$. Keep in mind $-h_1$ is a small addition to $q\cdot h_2$ around $0$ (because $Q_0$ is rather large).

\answca{Keep in mind the optimal control $\nu_t(q)$ is proportional to $h_1(t)-q\,h_2(t)$. This means when $t$ is close to zero (i.e. start of the trading), $q$ is close to $Q_0$, and hence $q\,h_2(t)$ is large compared to $h_1(t)$.}

\begin{figure}[!ht]
  \centering
  \includegraphics[width=\textwidth]{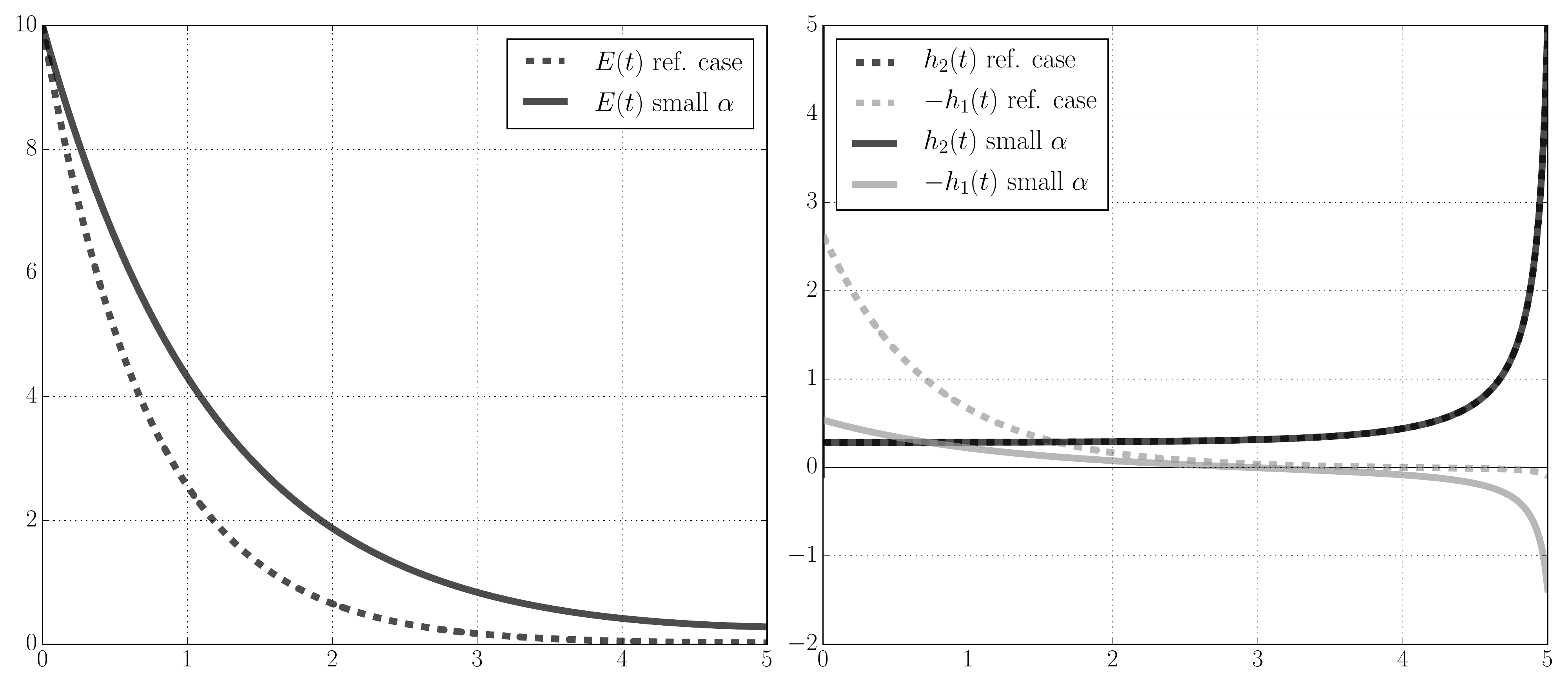}
  \caption{Comparison of the dynamics of $E$ (left) and $-h_1$ and $h_2$ (right) between the ``reference'' parameters of Figure \ref{fig:stdhEh} and smaller $\alpha$ (i.e. $\alpha=0.1$ instead of $0.4$) such that $|h_1(0)|$ is smaller.}
  \label{fig:E:small:h1}
\end{figure}

\begin{StylizedFact}[Driving $E$ to zero]\label{prop:Edrive}
  A large permanent impact $\alpha$, a large risk aversion $\phi$ and a small temporary impact $\kappa$ are the main components driving the net inventory of participants $E$ to zero.

%  Having smaller permanent market impact parameters $\alpha$ is a way to be less sensitive to the initial around $t=0$, but more around $t=T$.
\end{StylizedFact}

Another way to understand the optimal control $\nu$ is to look at its formulation not only as a function of $h_1$ and $h_2$, but at a function of $E$, $E'$ and $h_2$:
\begin{equation}
  \label{eq:ctrl:EEph2}
  \nu(t,q) = \frac{\partial_q v}{2\kappa} = \frac{1}{2\kappa } (h_1(t) - q \cdot h_2(t)) = E'(t) + \frac{E(t) - q}{2\kappa} \, h_2(t).
\end{equation}

Keeping in mind we formulated the problem from the viewpoint of a seller: $Q_0>0$ is the number of shares to be sold from $0$ to $T$, and $E_0>0$ means the net initial position of all participants is dominated by sellers. Since $E(t)$ decreases with time, $E'$ is rather negative.

The upper expression of the optimal control of our seller means the larger the remaining shares to sell $q(t)$, the faster to trade, proportionally to $h_2(t)$.
The influence of $A$ is clear: $h_2(T)=2A$ says the larger the terminal penalization, the faster to trade when $T$ is close, for a given remaining number of shares to sell.

%Replacing $h_1$ by its expression as a function of $E$, $E'$ and $h_2$ in the optimal control (and 
Expression \eqref{eq:ctrl:EEph2} for the optimal control reads:
\begin{StylizedFact}[Influence of $E(t)$ and $E'(t)$ on the optimal control]
  %On the one hand: (i) The level of the net inventory of the participants $E$ plays a role opposite to $q$: 
  %when the trader knows others have a large remaining inventory in his direction (i.e. $q$ and $E$ have the same sign): he has to wait they impact the price an unfavorable way.
  %The net trading speed of other participants $E'$ influences the trader such that if they are trading in the opposite direction (they are buying while he is selling), he has to slow down: the price will be better soon. On the contrary if they are trading in the same direction, he has to trade faster.
%$E'(t)$ influences the optimal control \emph{independently of the remaining quantity to trade} : even when $q$ 
The optimal control is made of two parts: one ($-q h_2/(2\kappa)$) is proportional to the remaining quantity and independent of others' behavior; the other ($h_1=E' + E/(2\kappa)$) increases with the net inventory of other market participants and follows their trading flow.
Hence, in this framework:
\begin{itemize}
\item[(i)] it is optimal to ``follow the crowd'' (because of the $E'$ term)
\item[(ii)] but not too fast (since $E$ and $E'$ often have an opposite sign); especially when $t$ is close to $T$ (because of the $h_2$ term in factor of $E$).
\end{itemize}
\end{StylizedFact}

This pattern can be seen as a \emph{fire sales} pattern: the trader should follow participants while they trade in the same direction. This also means when the trader's inventory is opposite to market participants' net inventory, he can afford to slow down (because the price will be better for him soon).

\begin{StylizedFact}[Optimal trading speed with a very low inventory]
When a participant is close to a zero inventory (i.e. $q$ is close to zero) or for participant with low terminal constraints and low risk aversion, it can be rewarding to ``follow the crowd''.
The dominant term is then $h_1$: a sign change of $h_1$ implies a change of trading direction for a participant with a low inventory. 
Nevertheless once a participant followed $h_1$, his (no more neglectable) inventory multiplied by $h_2$ drives his trading speed.  
\end{StylizedFact}

Readers can have a look at the right panel Figure \ref{fig:E:small:h1} to (solid grey line) observe a sign change of $h_1$ (around $t=4$).

\paragraph{A specific case where $h_2$ is almost constant.}
When $A$ is very close to $\sqrt{\kappa\phi}$, expression (\ref{eq:h2:gen}) for $h_2$ shows $c_2$ will be very close to zero, and hence $h_2(t)\simeq 2\sqrt{\kappa\phi}\simeq 2A$ for any $t$.

\begin{figure}[!ht]
  \centering
  \includegraphics[width=\textwidth]{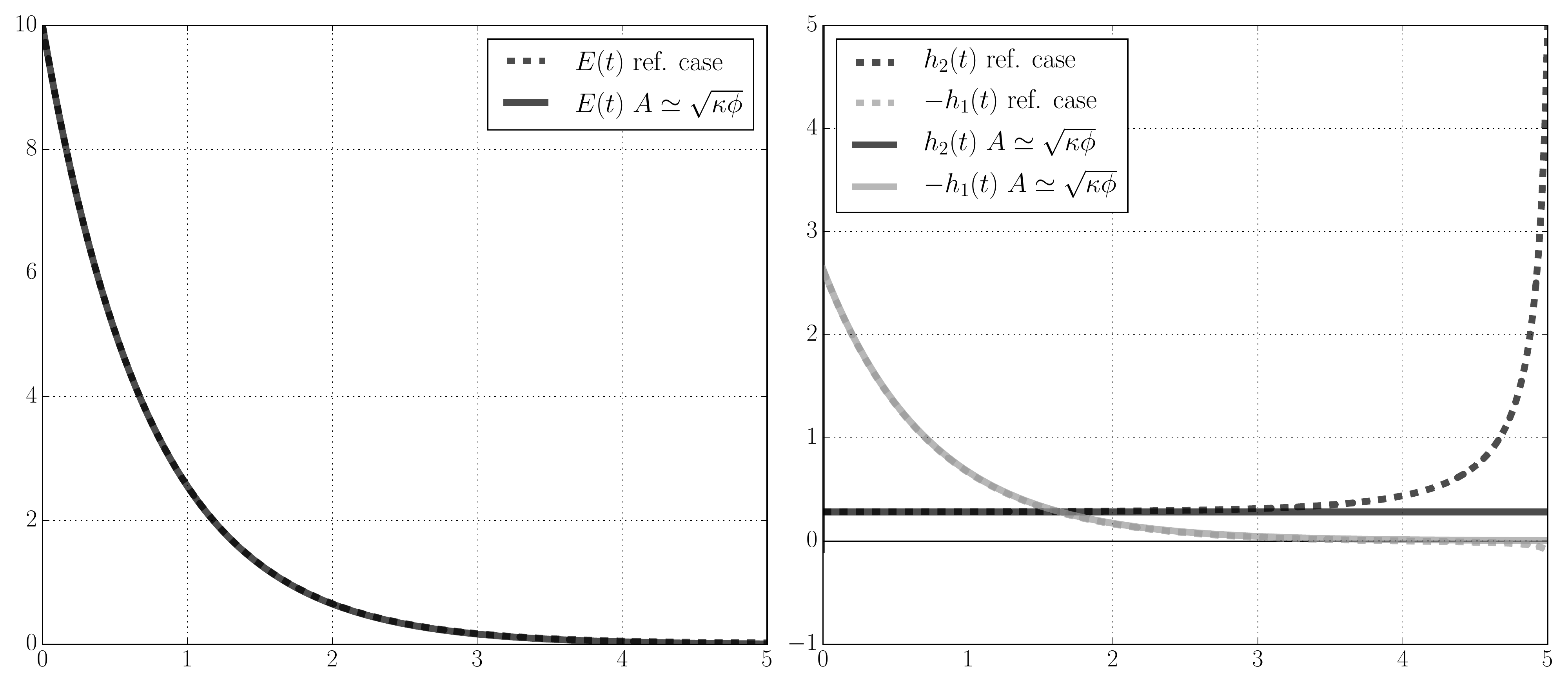}
  \caption{Comparison of the dynamics of $E$ (left) and $-h_1$ and $h_2$ (right) between the ``reference'' parameters of Figure \ref{fig:stdhEh} and when $\sqrt{\kappa\phi}\simeq A$: in such a case $h_2$ is almost constant but $E$ and $h_1$ are almost unchanged.}
  \label{fig:A:kp}
\end{figure}

Figure \ref{fig:A:kp} is an illustration of such a case: $A=\sqrt{\kappa\phi}=0.141$.
Since $E$ is not affected too much by this change in $A$ and remains close to zero when $t$ is large enough, the change of $h_2$ slope close to $T$ cannot affect significantly $h_1$.

\begin{StylizedFact}[Constant $h_2$\label{prop:cst:h2}]
  When $A=\sqrt{\kappa\phi}$, $h_2$ is constant (no more a function of $t$) equals to $2A$. 
  Hence the multiplier of $q$ (the remaining quantity) is a constant:
$$\nu(t,q) = E'(t) + \frac{E(t) - q}{\kappa} \,A.$$
\end{StylizedFact}

\paragraph{Specific configurations of $E$.}
When the considered trader has to sell while the net initial position of all the participants is to buy (i.e. $Q_0>0$ and $E_0<0$), the multiplier $h_2$ of the remaining quantity $q$ stays the same, but the constant term $h_1$ is turned into $-h_1$:

% Figure \ref{fig:E:neg} provides an example.
\begin{StylizedFact}[Alone against the crowd \label{prop:alone}]
 When the trader position does not have the same direction than the net inventory of all participants $E$:
 he has to trade slower, independently from his remaining inventory $q$.
\end{StylizedFact}

% \begin{figure}[!ht]
%   \centering
%   \includegraphics[width=\textwidth]{}
%   \caption{Comparison of the dynamics of $E$ (left) and $h_1$ and $h_2$ thought time when $E_0$ is positive or negative. $h_2$ is unchanged and $h_1$ is turned into $-h_1$. }
%   \label{fig:E:neg}
% \end{figure}
\begin{figure}[!h]
  \centering
  \includegraphics[width=\textwidth]{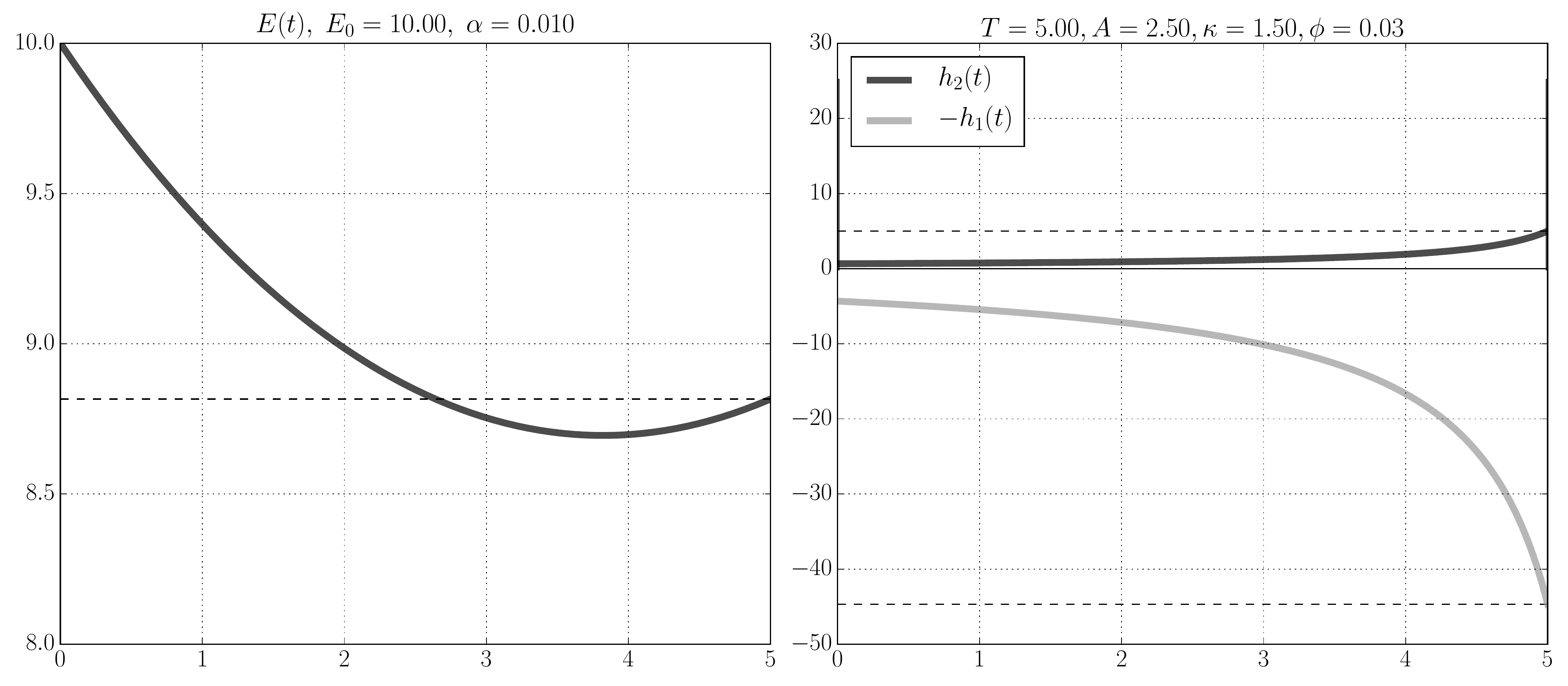}
  \caption{A specific case for which $E$ is not monotonous: $\alpha=0.01$, $\kappa=1.5$, $\phi=0.03$, $A=2.5$, $T=5$ and $E_0=10$. }
  \label{fig:E:inc}
\end{figure}

Moreover, the formulation of Stylized Fact \ref{prop:cst:h2} shows it is possible to change the monotony of $E(t)$ so that after a given $t$ is no more decreases:
$$E(t)'= E_0 a \underbrace{\left( r_+\exp\{r_ +t\}-r_-\exp\{r_- t\}\right) }_{\mbox{positive}}+ \underbrace{ E_0r_-\exp\{r_- t\}}_{\mbox{negative}}.
$$
For well chosen configuration of parameters the first term can be larger than the second term, for any $t$ greater than a critical $t^m$ such that $E'(t^m)=0$. For the configuration of Figure \ref{fig:E:inc}, with $\alpha=0.01$, $\kappa=1.5$, $\phi=0.03$, $A=2.5$, $T=5$ and $E_0=10$, we have $t^m\simeq 3.82$.

\answca{Going back to the meaning of this mean field game framework: it models dealing desks of asset managers receiving instructions from their portfolio managers to buy or sell large amounts of shares at the start of the day (or of the week). When $t^m<T$, it means that while the sum of initial instructions where to buy (respectively sell) this day, the ``mean field'' of dealing desks changed its mind: they did not strictly followed instructions. They bought (resp. sold) more than asked, and are now starting to sell back (resp. buy back) this temporary inventory to make profits by their own. Regulators could be interested by market parameters ($\alpha$, $\kappa$, $\phi$) allowing such configurations to appear.}

Figure \ref{fig:zE} shows configurations for which $t^m$ exists: small values of $\phi$ and $\alpha$ and large value of $\kappa$ are in favor of a small $t^m$. 
This means when the risk aversion and the permanent market impact coefficient are small while the temporary market impact is large, the slope of the net inventory of participants can have sign change.

\begin{figure}[!h]
  \centering
  \includegraphics[width=\textwidth]{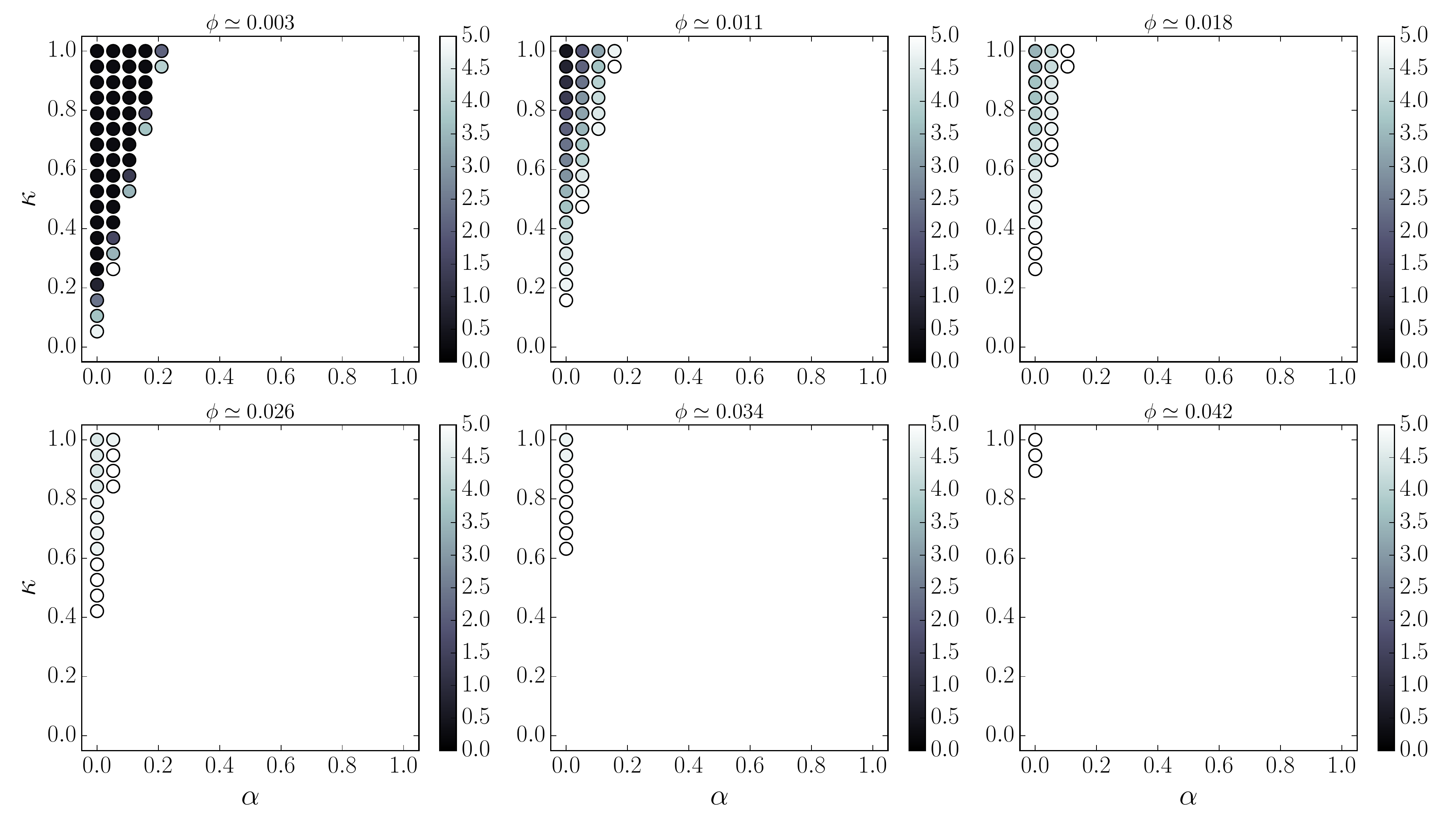}
  \caption{Numerical explorations of $t^m$ for different values of $\phi$ (very small $\phi$ at the top left to small $\phi$ at the bottom right) on the $\alpha\times\kappa$ plane, when $T=5$ and $A=2.5$. The color circles codes the value of $t^m$: small values (dark color) when $E$ changes its slope very early; large values (in light colors) when $E$ changes its slope close to $T$. }
  \label{fig:zE}
\end{figure}

%%%%%%%%%%%%%%%%%%%%%%%%%%%%%
\section{Trade crowding with heterogeneous preferences}
\label{sec:equilearn}

We now come back to our original model \eqref{eq:master} in which % the 
\answca{each} agent may be adverse to the risk % in 
a different way. In other words, the constants $A^a$ and $\phi^a$ may depend on $a$ % and 
\answca{but} are fixed %in time
\answca{for $0$ to $T$ (i.e. for the day or for the week)}. For simplicity, we will mostly work under the condition that $A^a= \sqrt{\phi^a \kappa}$, which allows to 
simplify the formulae. 

\subsection{Existence and uniqueness for the equilibrium model}
\label{sec:equilibrium}
As in the case of identical agents, we look for a solution to  system \eqref{eq:master} in which the map $v^a$ is quadratic: 
$$v^a(t,q) = h^a_0(t)+q\, h^a_1(t) - q^2 \, \frac{h^a_2(t)}{2},$$
and we find the relations:
\begin{equation}
 \label{eq:lqe1BIS}
 \begin{array}{rcl}
  0 &=& \displaystyle - 2\kappa (h^a_2)'- 4\kappa\phi^a+  (h^a_2)^2,\\
  \displaystyle 2\kappa \alpha\mu(t) &=&%
  \displaystyle  -2\kappa(h_1^a)' + h_1^ah_2^a,\\
  \displaystyle -(h_1^a)^2 &=& \displaystyle  4\kappa(h_0^a)' ,
\end{array}
\end{equation}
with terminal conditions 
\begin{equation}\label{eq:lqe1ICBIS}
h_0^a(T)=h_1^a(T)=0, \; h_2^a(T)= 2A^a.
\end{equation}
Let $m=m(t,dq,da)$ be the repartition at time $t$ of the wealth $q$ and of the parameter $a$. 
As before, the net trading flow $\mu$ at any time $t$ can be expressed as 
$$
  \mu(t) =  \int_{a,q} \frac{\partial_q v^a(t,q)}{2\kappa} \, m(t,da,dq) = \int_{a,q} \frac{1}{2\kappa}\left(h_1^a(t)-h_2^a(t)q\right)  \, m(t,da,dq).
$$
The measure $m$ solves the continuity equation 
$$
\partial_t m +\dive_q \left( m \frac{\partial_q v^a(t,q)}{2\kappa}\right)=0
$$
with initial condition $m_0=m_0(da,dq)$. For later use, we set 
$$
\bar m_0(da)= \int_q m_0(da,dq).
$$
As the agents do not change their parameter $a$ over the time, we always have 
$$
\int_q m(t,da,dq)= \bar m_0(da),
$$
so that we can disintegrate $m$ into 
$$
m(t,da,dq)= m^a(t,dq) \bar m_0(da),
$$
where $m^a(t,dq)$ is a probability measure in $q$ for $\bar m_0-$ almost any $a$. Let us set
$$
E^a(t)= \int_q q m^a(t,dq).
$$
Then by similar argument as in the case of identical agents  one has
\begin{equation}\label{eq:EaEa}
(E^a)'(t)= \frac{h_1^a(t)}{2\kappa}-\frac{h_2^a(t)}{2\kappa}E^a(t).
\end{equation}
With the notation $E^a$, we can rewrite $\mu$ as 
\begin{equation}\label{eq:mumu2}
\mu(t) = \frac{1}{2\kappa}\left(   \int_{a} h_1^a(t)\bar m_0(da) -\int_a h_2^a(t)E^a(t)\bar m_0(da)\right).
\end{equation}

\bigskip
From now on we assume {\it for simplicity of notation} that 
\begin{equation}\label{hyp.Aa=phikappa}
A^a = \sqrt{\phi^a\kappa}\qquad \forall a
\end{equation}
so that $h^a_2$ is constant in time with $h^a_2(t)= 2A^a$. We set 
$$
\theta^a = \frac{h_2^a}{2\kappa}.
$$
We solve the equation satisfied by $h_1^a$: as $h_1^a (T)=0$, we get
$$
h_1^a(t)= \alpha \int_t^T ds \exp\{ \theta^a(t-s)\} \mu(s).
$$
In the same way, we solve the equation for $E^a$ in \eqref{eq:EaEa} by taking into account the fact that $E^a(0)= E^a_0:=\int_q q \bar m_0(a,q)dq$ and the 
computation for $h_1^a$: we find
$$
E^a(t)= \exp\{ -\theta^a t\} E^a_0 +\frac{\alpha}{2\kappa}\int_0^t d\tau\int_\tau^T ds \exp\{\theta^a(2\tau-t-s)\} \mu(s).
$$
\answca{This can be read as the ``mean field'' associated to players with risk aversion parameters $(\phi^a,A^a)$.}

Putting these relations together \answca{and summing over all risk aversions (i.e. integrating over ${\bar m}_0(da)$)} we obtain by \eqref{eq:mumu2} that $\mu$ satisfies:
\begin{equation}\label{eq:fixedptmu}
\begin{array}{rl}
\ds \mu(t) \;  = &\ds -\int_a \theta^a\exp\{ -\theta^a t\} E_0^a\bar m_0(da) 
%\\
%& \ds 
+\frac{\alpha}{2\kappa} \int_t^T ds\mu(s)\int_a \exp\{ \theta^a(t-s)\} \bar m_0(da)\\
& \ds 
-\frac{\alpha}{2\kappa} \int_0^t d\tau\int_\tau^T ds \mu(s) \int_a \theta^a \exp\{\theta^a(2\tau-t-s)\} \bar m_0(da).
\end{array}
\end{equation}

\begin{Proposition}\label{prop:eqHe} \answ{Assume that \eqref{hyp.Aa=phikappa} holds and that the $(A^a)$ are bounded.} Then there exists $\alpha_0>0$ such that, for $|\alpha|\leq \alpha_0$, there exists a unique solution to the fixed point relation \eqref{eq:fixedptmu}. 
\end{Proposition}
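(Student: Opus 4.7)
The plan is to recast \eqref{eq:fixedptmu} as an affine fixed-point equation $\mu = F + \alpha L \mu$ on the Banach space $C([0,T];\R)$ equipped with the sup-norm, and then to apply the Banach contraction principle as soon as $|\alpha|$ is small. Concretely one sets
$$F(t) := -\int_a \theta^a \exp\{-\theta^a t\}\, E_0^a \, \bar m_0(da),$$
$$(L\mu)(t) := \frac{1}{2\kappa}\int_t^T \mu(s)\int_a \exp\{\theta^a(t-s)\}\bar m_0(da)\, ds - \frac{1}{2\kappa}\int_0^t\int_\tau^T \mu(s)\int_a \theta^a \exp\{\theta^a(2\tau - t - s)\}\bar m_0(da)\, ds\, d\tau,$$
so that \eqref{eq:fixedptmu} reads exactly $\mu = F + \alpha L\mu$.

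The key point for bounding $L$ is a sign analysis of the exponents. Under \eqref{hyp.Aa=phikappa} one has $\theta^a = A^a/\kappa \geq 0$, while $t - s \leq 0$ for $s \in [t,T]$ and $2\tau - t - s \leq \tau - t \leq 0$ for $\tau \in [0,t]$ and $s \in [\tau,T]$; hence every exponential inside $L$ is dominated by $1$. Writing $\bar\theta := \|A^a\|_{L^\infty(\bar m_0)}/\kappa$ (finite by the boundedness assumption), a direct integration yields
$$\|L\mu\|_\infty \leq C_0\, \|\mu\|_\infty, \qquad C_0 := \frac{T}{2\kappa}+\frac{\bar\theta\, T^2}{4\kappa}.$$
Likewise $\|F\|_\infty \leq \bar\theta \int_a |E_0^a|\, \bar m_0(da)$, which is finite under the implicit hypothesis that $m_0$ has a finite first moment in $q$; continuity of $F$ and of $L\mu$ in $t$ is a routine consequence of dominated convergence. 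Setting $\alpha_0 := 1/C_0$, the map $\mu \mapsto F + \alpha L\mu$ is a strict contraction on $C([0,T];\R)$ for every $|\alpha| < \alpha_0$, and Banach's theorem delivers a unique fixed point, establishing both claims of the proposition.

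I do not expect any substantive obstacle beyond careful bookkeeping: the argument is essentially the standard Picard iteration for a linear Volterra-type equation perturbed by a small parameter. The only genuinely useful observation is the sign analysis above, which avoids an $\exp\{\bar\theta T\}$ blow-up in the operator norm and yields the clean explicit threshold $\alpha_0$. A cruder bound with $\exp\{\bar\theta T\}$ in place of $1$ would still produce some $\alpha_0 > 0$, and would also cover possible sign relaxations of \eqref{hyp.Aa=phikappa}.
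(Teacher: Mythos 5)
Your argument is correct and is essentially the paper's own proof: both recast \eqref{eq:fixedptmu} as a fixed point of an affine map on $C([0,T])$ whose linear part carries a factor $\alpha$, bound that linear part in operator norm using the boundedness of the $\theta^a$, and invoke the Banach contraction principle for $|\alpha|$ small. Your only addition is the sign analysis of the exponents yielding the explicit threshold $\alpha_0 = 1/C_0$, where the paper is content with an unspecified constant $C$.
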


As a consequence, the MFG system has at least one solution obtained by plugging the fixed point $\mu$ into the relations \eqref{eq:lqe1BIS}.

\begin{proof} One just uses Banach fixed point Theorem on $\Phi_\alpha:C([0,T])\to C([0,T])$ which associates to any $\mu\in C([0,T])$ the map 
$$
\begin{array}{l}
\ds \Phi_\alpha(\mu)(t)\; := \;\ds -\int_a \theta^a\exp\{ \theta^a t\} E_0^a\bar m_0(da) +\frac{\alpha}{2\kappa} \int_t^T ds\mu(s)\int_a \exp\{ \theta^a(t-s)\} \bar m_0(da)\\
\qquad \qquad \qquad \qquad \ds-\frac{\alpha}{2\kappa} \int_0^t d\tau\int_\tau^T ds \mu(s) \int_a \theta^a \exp\{\theta^a(2\tau-t-s)\} \bar m_0(da).
\end{array}
$$
It is clear that $\Phi_\alpha$ is a contraction for $|\alpha|$ small enough{: \answca{i}ndeed, given $\mu,\mu'\in  C([0,T])$, we have 
$$
\begin{array}{rl}
\left|\Phi_\alpha(\mu)(t)- \Phi_\alpha(\mu')(t)\right| \; \leq & 
\ds \frac{|\alpha|}{2\kappa} \int_t^T ds\left|\mu(s)-\mu'(s)\right| \int_a \exp\{ \theta^a(t-s)\} \bar m_0(da)\\
%\qquad \qquad \qquad \qquad 
& \ds+ \frac{|\alpha|}{2\kappa} \int_0^t d\tau\int_\tau^T ds \left| \mu(s) -\mu'(s)\right| \int_a \theta^a \exp\{\theta^a(2\tau-t-s)\} \bar m_0(da)\\
\leq & \; C|\alpha| \|\mu-\mu'\|_\infty,
 \end{array}
$$
for some constant $C$ independent of $\alpha$, because the $(\theta^a)$ are bounded. }
\end{proof}

%%%%%%%%%%%%%%%%%%%%%%%%%%%%%
%\subsection{Convergence to the MFG equilibrium (learning)}
\subsection{Learning other participants' net flows to (potentially) converge towards the MFG equilibrium}
\label{sec:learning}

The solution of the MFG system describing an equilibrium configuration, one may wonder how this configuration can be reached without the coordination of the agents. We present here a simple model to explain this phenomen\answca{on}%um
. For this we assume %there is an extra time $n$ which corresponds of the number of times the agents have been trading so far (the characteristic of the trade being fixed).  
the game repeats the same $[0,T]$ intervals an infinite number of \emph{rounds}. 
The reader can think about $0$ to $T$ being a day (or a week), and hence each round will last a day (or a week). Round after round, market participants try to ``learn'' (i.e. to build an estimate of) the trading speed $\mu_t$ of equation \eqref{eq:price}.

It is close to what dealing desks of asset managers\footnote{Large asset managers, like Blackrock, Amundi, Fidelity or Allianz, delegate the implementation of their investment decisions to a dedicated (internal) team: their dealing desk. This team is in charge of trading to drive the real portfolios to their targets. They are implementing on a day to day basis what this paper is modelling.} are doing on financial markets: they try to estimate the buying or selling pressure exerted by other market participants to adjust their own behaviour.
That for, investment banks provide to their clients (asset managers) information about the recent state of the flows on exchanges, to help them to adjust their trading behaviours. For instance, JP Morgan's corporate and investment bank has a twenty pages recurrent publication titled ``\emph{Flows and Liquidity}'', with charts and tables describing different metrics like \emph{money flows}, turnovers, and other similar metrics by asset class (equity, bonds, options) and countries. Almost each investment bank has such a publication for its clients; Barclay's one is titled ``\emph{TIC Monthly Flows}''. As an example its August 2016 issue (9 pages) subtitle was ``\emph{Private foreign investors remain buyers}''.

Combining these expected flows with market impact models identified on their own flows (see \cite{citeulike:13497373}, \cite{citeulike:13266538} or \cite{citeulike:12825932} for academics papers on the topic), traders of dealing desks tune their optimal liquidation (or trading) schemes for the coming days. 
It is hence interesting to note this practice is very close to the framework we propose in this Subsection.

We also assume that, at the beginning of round $n$, each agent (generically indexed by $a$) has an estimate on $\mu^{a,n}$ on the crowd impact. Then he solves his corresponding optimal control problem: 
\begin{equation}\label{eq:HJlqn}
 \begin{array}{rcl}
  % 0 &=& \displaystyle - \frac{ (h^{a,n}_2)'}{2}- \phi^a+ \frac{1}{4\kappa} (h^{a,n}_2)^2,\\
  % \displaystyle \alpha\mu^{a,n}(t) &=&%
  % \displaystyle - (h_1^{a,n})' + \frac{h_1^{a,n}h_2^{a,n}}{2\kappa},\\
  % \displaystyle -\frac{(h_1^{a,n})^2}{4\kappa} &=& \displaystyle  (h_0^{a,n})' ,
  0 &=& \displaystyle - 2\kappa\frac{ (h^{a,n}_2)'}{2}- 4\kappa\phi^a+ (h^{a,n}_2)^2,\\
  \displaystyle 2\kappa\alpha\mu^{a,n}(t) &=&  \displaystyle - 2\kappa (h_1^{a,n})' + {h_1^{a,n}h_2^{a,n}},\\
  \displaystyle -{(h_1^{a,n})^2} &=& \displaystyle {4\kappa} (h_0^{a,n})' ,
\end{array}
\end{equation}
with terminal conditions 
$$
h_0^{a,n}(T)=h_1^{a,n}(T)=0, \; h_2^{a,n}(T)= 2A^a.
$$
For simplicity we assume \answca{once more} that 
$$
A^a = \sqrt{\phi^a\kappa}\qquad \forall a
$$
 so that $h^{a,n}_2$ is constant in time and independent of $n$: $h^a=2A^a$. We set as before
$$
\theta^a = \frac{h_2^a}{2\kappa}.
$$
We find as previously
$$
h_1^{a,n}(t)= \alpha \int_t^T ds \exp\{ \theta^a(t-s)\} \mu^{a,n}(s).
$$
and 
$$
E^{a,n}(t)= \exp\{ -\theta^a t\} E^a_0 +\frac{\alpha}{2\kappa}\int_0^t d\tau\int_\tau^T ds \exp\{\theta^a(2\tau-t-s)\} \mu^{a,n}(s).
$$
Then the net sum \answ{$m^{n+1}$} of the trading speed of all investors  at stage $n$ is given by \eqref{eq:mumu2} where $E^a$ and $h^a_1$ are replaced by $E^{a,n}$ and $h_1^{a,n}$:
\begin{equation}\label{eq:mn+1}
\begin{array}{rl}
\ds m^{n+1}(t) \;  := & \ds {
\frac{1}{2\kappa}\left(   \int_{a} h_1^{a,n}(t)\bar m_0(da) -\int_a h_2^aE^{a,n}(t)\bar m_0(da)\right)}\\
= & \ds -\int_a \theta^a\exp\{ -\theta^a t\} E_0^a\bar m_0(da) 
+\frac{\alpha}{2\kappa} \int_t^T ds\int_a \mu^{a,n}(s)\exp\{ \theta^a(t-s)\} \bar m_0(da)\\
& \ds 
-\frac{\alpha}{2\kappa} \int_0^t d\tau\int_\tau^T ds  \int_a\mu^{a,n}(s) \theta^a \exp\{\theta^a(2\tau-t-s)\} \bar m_0(da).
\end{array}
\end{equation}
We now model how the agents evaluate the crowd impact. We assume that each agent has his own way of evaluating the value of $m^{n+1}(t)$ and of incorporating this new value into his  estimate of the crowd impact. Namely, we suppose a relation of the form
$$
\mu^{a,n+1}(t):= (1-\pi^{a,n+1}) \mu^{a,n}(t)+\pi^{a,n+1}(m^{n+1}(t)+ \ep^{a,n+1}(t)),
$$
where $\pi^{a,n+1}\in (0,1)$ is the weight \answca{chosen by an agent $a$ at round $n+1$ to adjust his new estimate of $\mu$ with respect to his previous belief $\mu^{a,n}(t)$ and his new estimate $m^{n+1}(t)+ \ep^{a,n+1}(t)$} %between the past and the present 
and $\ep^{a,n+1}$ is a small error term on the measured crowd impact: $\|\ep^{a,n+1}\|_\infty\leq \ep$. 

\begin{Proposition} 
\label{prop:learning} \answ{Under the assumption of Proposition \ref{prop:eqHe}}, 
let $\mu$ be the solution of the MFG crowd trading model. Assume that 
$$
\frac{1}{{C}n}\leq  \pi^{a,n}\leq \frac{C}{n},
$$
for some constants $C$.  Suppose also that  $\alpha$ is small enough: $|\alpha|\leq \alpha_1$ for some small $\alpha_1>0$ depending on $C$.
Then 
$$
\limsup \sup_a  \left\| \mu^{a,n}-\mu\right\|_\infty\leq C\ep
$$
for some constant $C$. 
\end{Proposition}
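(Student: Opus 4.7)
The plan is to exploit the affine structure of the fixed-point relation \eqref{eq:fixedptmu} and to reduce the convergence problem to a scalar Robbins--Monro type recursion on $e_n:=\sup_a \|\mu^{a,n}-\mu\|_\infty$. First, I introduce the affine operator $T$ acting on families $\{\nu^a(\cdot)\}_a$ of continuous maps on $[0,T]$ by replacing, in the right-hand side of \eqref{eq:fixedptmu}, the function $\mu(s)$ inside each $a$-integral by $\nu^a(s)$. By construction \eqref{eq:mn+1} reads $m^{n+1}=T(\{\mu^{a,n}\}_a)$, while the MFG fixed point satisfies $\mu=T(\{\mu\}_a)$ (the family constantly equal to $\mu$). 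Subtracting these two identities, the constant-in-$\nu$ term of $T$ cancels and literally the same computation as in the proof of Proposition~\ref{prop:eqHe} (that computation only used the constancy of $\mu$ in $a$ in order to pull a sup outside of the $a$-integral) yields the contraction bound
\begin{equation*}
\|m^{n+1}-\mu\|_\infty \;\leq\; C_0\,|\alpha|\,e_n,
\end{equation*}
for a constant $C_0$ depending only on the $L^\infty$ bound of $(\theta^a)$.

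Next, the learning rule rewrites, for each $a$, as
\begin{equation*}
\mu^{a,n+1}-\mu=(1-\pi^{a,n+1})(\mu^{a,n}-\mu)+\pi^{a,n+1}(m^{n+1}-\mu)+\pi^{a,n+1}\epsilon^{a,n+1}.
\end{equation*}
Taking sup-norms and using the contraction bound together with $\|\epsilon^{a,n+1}\|_\infty\leq\epsilon$ gives $\|\mu^{a,n+1}-\mu\|_\infty\leq e_n-\pi^{a,n+1}(\gamma e_n -\epsilon)$ with $\gamma:=1-C_0|\alpha|$. Choose $\alpha_1$ small enough that $\gamma>0$. A short case split on the sign of $\gamma e_n-\epsilon$, combined with the elementary observation $(1-\gamma\bar\pi_{n+1})e_n\leq (1-\gamma\underline\pi_{n+1})e_n$ (valid because $e_n\geq 0$ and $\underline\pi_{n+1}\leq \bar\pi_{n+1}$), shows that after taking $\sup_a$
\begin{equation*}
e_{n+1} \;\leq\; (1-\gamma\underline\pi_{n+1})e_n + \bar\pi_{n+1}\epsilon \;\leq\; \Bigl(1-\frac{\gamma}{C(n+1)}\Bigr)e_n + \frac{C\epsilon}{n+1},
\end{equation*}
where $\underline\pi_{n+1}:=\inf_a\pi^{a,n+1}$, $\bar\pi_{n+1}:=\sup_a\pi^{a,n+1}$, and the last inequality uses the assumed two-sided bound on $\pi^{a,n+1}$.

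Finally, this scalar inequality is a classical discrete stochastic-approximation estimate. Setting $r:=C^2\epsilon/\gamma$, one checks that $e_n\geq r$ forces $e_{n+1}-r\leq (1-\gamma/(C(n+1)))(e_n-r)$, whereas $e_n\leq r$ yields $e_{n+1}\leq r+C\epsilon/(n+1)$. Since $\prod_{k\geq 1}(1-\gamma/(Ck))=0$, the first implication drives $(e_n-r)^+$ to zero, and the second bounds the overshoot above $r$ by an $O(1/n)$ term, so $\limsup_n e_n \leq r=(C^2/\gamma)\,\epsilon$, which is the claimed estimate. The main conceptual step, rather than a real obstacle, is recognizing that $m^{n+1}$ is obtained from the very same affine map as the one defining the MFG fixed point, just evaluated on an $a$-varying family, so that the contraction estimate of Proposition~\ref{prop:eqHe} carries over verbatim; the only mildly delicate technical point is then the case analysis above, which is dispatched by the monotonicity observation on $(1-\gamma\pi)e_n$.
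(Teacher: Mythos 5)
Your proof is correct and follows essentially the same route as the paper: the same contraction estimate $\|m^{n+1}-\mu\|_\infty\leq C|\alpha|\sup_a\|\mu^{a,n}-\mu\|_\infty$, obtained from the affine structure of \eqref{eq:mn+1} versus \eqref{eq:fixedptmu}, leading to the same recursive inequality for $e_n=\sup_a\|\mu^{a,n}-\mu\|_\infty$. The only difference is in the concluding elementary step and is cosmetic: the paper unrolls the recursion and bounds $\sum_k\delta_k\prod_{l>k}\beta_l$ via the estimate $\prod_{l=k}^n\beta_l\leq C(n/k)^{-1/C}$, whereas you subtract the equilibrium value $r=C^2\ep/\gamma$ so that the inhomogeneous term vanishes and only $\prod_k(1-\gamma/(Ck))\to 0$ is needed --- a slightly cleaner finish yielding the same conclusion.
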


Let us note that, as the solution to the system \eqref{eq:HJlqn} depends in a continuous way of $\mu^{a,n}$, the optimal trading strategies of the agents are close to the one corresponding to the equilibrium configuration for $n$ sufficiently large. 

\begin{proof} In the proof the constant $C$ might differ from line to line, but does not depend on $n$ nor on $\ep$. We have
$$
\begin{array}{rl}
\ds \sup_a \|\mu^{a,n+1}-\mu\|_\infty \; \leq& \ds   \sup_a \|(1-\pi^{a,n+1}) \mu^{a,n}+ \pi^{a,n+1}(m^{n+1}+\ep^{a,n+1})-\mu\|_\infty\\
\leq & \ds    \sup_a \left( (1-\pi^{a,n+1}) \|\mu^{a,n}-\mu\|_\infty +\pi^{a,n+1}\|m^{n+1}-\mu\|_\infty+ \pi^{a,n+1}\|\ep^{a,n+1}\|_\infty\right)
\end{array}
$$
where, by \eqref{eq:mn+1}: 
$$
\|m^{n+1}-\mu\|_\infty\leq C\frac{\alpha}{\answca{2\kappa}} \sup_a \left\| \mu^{a,n}-\mu\right\|_\infty .
$$
So
$$
\sup_a \|\mu^{a,n+1}-\mu\|_\infty \; \leq \sup_a\left( (1-\pi^{a,n+1}) +C\frac{\alpha}{\answca{2\kappa}} \pi^{a,n+1}\right)\sup_a  \left\| \mu^{a,n}-\mu\right\|_\infty+
\sup_a\pi^{a,n+1}\ep.
$$
Thus, setting $\beta_n= \sup_a\left( (1-\pi^{a,n}) +C\frac{\alpha}{\answca{2\kappa}} \pi^{a,n}\right)$ and 
$\delta_n:= \sup_a\pi^{a,n}$, we have
$$
\sup_a  \left\| \mu^{a,n}-\mu\right\|_\infty\leq \sup_a  \left\| \mu^{a,0}-\mu\right\|_\infty\prod_{k=1}^{n}\beta_k
+ \ep\sum_{k=1}^{n} \delta_k \prod_{l=k+1}^{n}\beta_l .
$$
By our assumption on $\pi^{a,n}$, we can choose $\alpha$ small enough so that 
$$
 \beta_n\leq 1-\frac{1}{Cn}\qquad {\rm and}\qquad \delta_n \leq \frac{C}{n}.
$$
Then, for $1\leq k<n$, we have 
$$
\ln\left( \prod_{l=k}^{n}\beta_l\right)\leq \sum_{l=k}^{n}\ln(1-1/(Cl)) \leq -(1/C)  \sum_{l=k}^{n}\frac{1}{l}\leq -(1/C) \ln((n+1)/k).
$$
Hence $\displaystyle \prod_{l=k}^{n}\beta_l \leq  C\left(\frac{n}{k}\right)^{-1/C}$, which easily implies that
$$
\lim_n \prod_{k=1}^{n}\beta_k=0\qquad {\rm and}\qquad \sum_{k=1}^{n} \delta_k \prod_{l=k+1}^{n}\beta_l \leq C.
$$
The desired result follows.
\end{proof}

%%%%%%%%%%%%%%%%%%%%%%%%%%%%%
\section{A General Model for Mean Field Games of Controls} 
%\section{A general modeln Existence Result} 
\label{sec:generic}

In this section we discuss a general existence result  for a Mean Field Game of Control (or, in the terminology of \cite{gomes2014existence}, an Extended Mean Field Game). As in our main application above, we aim at describing a system in which the {\it infinitely many small} agents control their own state and interact through a ``mean field of control". By ``small", we mean that the individual behavior of each agent has a negligible influence on the whole system. The requirement that they are ``infinitely many" corresponds to the fact that their initial configuration is distributed according to an absolutely continuous density on the state space. The ``mean field of control" consists in the  {\it joint distribution of the agents and their instantaneous control}: this is in contrats with the standard MFGs,  in which %it 
\answca{the mean field} is the distribution of the positions of the agents only. 

We denote by $X_t\in \R^d$  the individual state of a generic agent at time $t$ and by $\alpha_t$ his control. 
The state space is the finite dimensional space $\R^d$,  while the controls take their value in a metric space $(A,\delta_A)$. \answca{In this Section the} %The
 distribution density of the pair $(X_t,\alpha_t)$  is denoted by $\mu_t$. It is a probability measure on $\R^d\times A$. The first marginal $m_t$ of $\mu_t$ is the distribution of the players at time $t$ (hence a probability measure on $\R^d$). In the MFG of control, dynamics and payoffs depend on $(\mu_t)$ (and not only on $(m_t)$ as in standard MFGs). 

We assume that the dynamics of a small agent is a controlled SDE of the form 
$$
\left\{\begin{array}{l}
dX_t =b(t,X_t, \alpha_t;\mu_t)dt + \sigma(t,X_t)dW_t\\
X_{t_0}=x_0
\end{array}\right.
$$
where $\alpha$ is the control and $W$ is a standard $D-$dimensional Brownian Motion (the Brownian Motions of the agents are independent).
\answ{Note that, for simplicity of notation, the heterogeneity of the agents (i.e., the parameter $a$ in the previous section) is encoded here in the state variable: it is a variable which is not affected by the dynamics. For this reason it is important to handle a degenerate diffusion term $\sigma$.}
The cost function is given by 
$$
J(t_0,x_0, \alpha; \mu)= \Esp \left[ \int_{t_0}^T L(t,X_t,\alpha_t;\mu_t)  \ dt+g(X_T,m_T)\right].
$$
It is known that, given $\mu$, the value function $u=u(t_0,x_0;\mu)$ of the agent, defined by 
$$
u(t_0,x_0)= \inf_\alpha J(t_0,x_0, \alpha; \mu),
$$
is a viscosity solution of the HJB equation 
\begin{equation}\label{eq:HJu}
\left\{\begin{array}{l}
-\partial_t u(t,x)-{\rm tr}(a(t,x)D^2u(t,x))+H(t,x,Du(t,x);\mu_t)= 0 \qquad {\rm in} \; (0,T)\times \R^d\\
u(T,x)= g(x)  \qquad {\rm in} \; \R^d
\end{array}\right.
\end{equation}
where $a=(a_{ij})=\sigma\sigma^T$ and 
\begin{equation}\label{def:H}
H(t,x,p;\nu)= \sup_{\alpha} \left\{ -p\cdot b(t,x,\alpha;\nu)- L(t,x,\alpha;\nu)\right\}.
\end{equation}
Moreover $b(t,x,\alpha^*(t,x);\mu_t):=-D_pH(t,x,Du(t,x);\mu_t)$ is (formally) the optimal drift for the agent at position $x$ and at time $t$. Thus, the population density $m=m_t(x)$ is expected to evolve according to the Kolmogorov equation
\begin{equation}\label{eq:Kolmo}
\left\{\begin{array}{l}
\partial_t m_t(x) -\sum_{i,j} \partial_{ij} (a_{ij}(t,x)m_t(x))-\dive \left( m_t(x)D_pH(t,x,Du(t,x);\mu_t)\right)=0 \\
\qquad \qquad \qquad \qquad \qquad \qquad \qquad \qquad \qquad \qquad \qquad \qquad \qquad \qquad  {\rm in} \;  (0,T)\times \R^d\\
m_0(x)= \bar m_0(x)  \qquad {\rm in} \; \R^d
\end{array}\right.
\end{equation}
Throughout this part we assume that the map $\alpha\to b(t,x,\alpha;\nu)$ is one-to-one with a smooth inverse and we denote by 
$\alpha^*=\alpha^*(t,x,p; \nu)$  the map which associates to any $p$ the unique control $\alpha^*\in A$ such that 
\begin{equation}\label{defPhi}
b(t,x,\alpha^*;\nu)= -D_pH(t,x,p;\nu).
\end{equation}
This means that, at each time $t$ and position $x$, the optimal control of a typically small player is $\alpha^*(t,x,Du(t,x); \mu_t)$. So, in an equilibrium configuration, the measure $\mu_t$ has to be the image of the measure $m_t$ by the map $x\to (x,\alpha^*(t,x,Du(t,x);\mu_t))$. This leads to the fixed-point relation:
\begin{equation}\label{def:mut}
\mu_t = \left( id, \alpha^*(t,\cdot,Du(t,\cdot);\mu_t)\right)\sharp m_t,
\end{equation}
\answ{where the right-hand side is the image of the measure $m_t$ by the map $x\to (x,  \alpha^*(t,x,Du(t,x);\mu_t))$.}
To summarize, the MFG of control takes the form: 
\begin{equation}\label{eq:MFGgen}
\left\{\begin{array}{ll}
(i) & -\partial_t u(t,x)-{\rm tr}(a(t,x)D^2u(t,x))+H(t,x,Du(t,x);\mu_t)= 0 \; {\rm in} \; (0,T)\times \R^d,\\
\qquad \\
(ii) & \partial_t m_t(x) -\sum_{i,j} \partial_{ij} (a_{ij}(t,x)m_t(x))-\dive \left( m_t(x)D_pH(t,x,Du(t,x);\mu_t)\right)=0 \\
 & \qquad \qquad \qquad \qquad \qquad \qquad \qquad \qquad \qquad \qquad \qquad \qquad    {\rm in} \; (0,T)\times \R^d,\\
(iii) & m_0(x)= \bar m_0(x), \; u(T,x)= g(x,m_T)  \qquad {\rm in} \; \R^d,\\
\qquad \\
(iv) & \mu_t = \left( id, \alpha^*(t,\cdot,Du(t,\cdot);\mu_t)\right)\sharp m_t \qquad {\rm in} \;[0,T].
\end{array}\right.
\end{equation}

The typical framework in which we expect to have a solution is the following: $u$ is continuous in $(t,x)$, Lipschitz continuous in $x$ (uniformly with respect to $t$) and satisfies equation \eqref{eq:HJu} in the viscosity sense; $m$ is in $L^\infty$ and satisfies \eqref{eq:Kolmo} in the sense of distribution. \\

In order to state the assumptions on the data, we need a few notation\answca{s}. Given a metric space $(E,\delta_E)$ we denote by ${\mathcal P}_1(E)$ the set of Borel probability measures $\nu$ on $E$ with a finite first order moment $M_1(\nu)$:
$$
M_1(\nu)= \int_E \delta_E(x_0, x)d\nu(x)<+\infty
$$
for some (and thus all) $x_0\in E$. The set  ${\mathcal P}_1(E)$ is endowed with the Monge-Kantorovitch distance: 
$$
{\bf d}_1(\nu_1,\nu_2)= \sup_{\phi}\int_E\phi(x)d(\nu_1-\nu_2)(x) \qquad \forall \nu_1,\nu_2\in  {\mathcal P}_1(E),
$$
where the supremum is taken over all 1-Lipschitz continuous maps $\phi:E\to \R$. 

We will prove the existence of a solution for \eqref{eq:MFGgen} under the following assumptions: 
\begin{enumerate}
\item The terminal cost $g:\R^d\times {\mathcal P}_1(\R^d)\to \R$ and the diffusion matrix $\sigma:[0,T]\times \R^d\to \R^{d\times D}$ are continuous and bounded, uniformly bounded in $C^2$ in the space variable,
\item\label{hypb} The drift has a separate form: $b(t,x,\alpha, \mu_t)= b_0(t,x,\mu_t)+ b_1(t,x,\alpha)$, 
\item\label{hypLL} The map $L:[0,T]\times \R^d\times A\times {\mathcal P}_1(\R^d\times A)\to \R$ satisfies the Lasry-Lions monotonicity condition: for any $\nu_1,\nu_2\in {\mathcal P}_1(\R^d\times A)$ with the same first marginal, 
$$
\int_{\R^d\times A} (L(t,x,\alpha;\nu_1)-L(t,x,\alpha;\nu_2))d(\nu_1-\nu_2)(x,\alpha)\geq 0, 
$$
\item\label{Condunik} For each $(t,x,p, \nu)\in [0,T]\times \R^d\times \R^d\times {\mathcal P}_1(\R^d\times A)$, there exists a unique maximum point $\alpha^*(t,x,p;\nu)$ in \eqref{defPhi} and  $\alpha^*:[0,T]\times \R^d\times \R^d\times {\mathcal P}_1(\R^d\times A)\to \R$ is continuous, with a linear growth: for any $L>0$, there exists $C_L>0$ such that 
$$
\delta_A(\alpha_0, \alpha^*(t,x,p;\nu))\leq C_L(|x|+1)\qquad \forall (t,x,p,\nu)\in [0,T]\times \R^d\times \R^d\times {\mathcal P}_1(\R^d\times A)\; {\rm with}\; |p|\leq L,
$$
(where $\alpha_0$ is a fixed element of $A$). 
\item The Hamiltonian $H:[0,T]\times \R^d\times \R^d\times {\mathcal P}_1(\R^d\times A)\to \R$ is continuous; $H$ is bounded in $C^2$ in $(x,p)$ uniformly with respect to $(t,\nu)$, and convex in $p$.
\item The initial measure $\bar m_0$ is a continuous probability density on $\R^d$ with a finite second order moment. 
\end{enumerate} 

The uniform bounds and uniform continuity assumptions are very strong requirement\answca{s}: for instance they are not satisfied in the linear-quadratic example studied before. These conditions can be relaxed in a more or less standard way; we choose not do so in order to keep the argument relatively simple. 

Following \cite{carmonadelaruebook}, assumptions \eqref{hypb} and \eqref{hypLL} ensure the uniqueness of the fixed-point in \eqref{def:mut}. For the sake of completeness, details are given in Lemma \ref{lem:PointFixe} below. 

\begin{Theorem}\label{theo:existencegen} Under the above assumptions, there exists at least one solution to the MFG system of controls \eqref{eq:MFGgen} for which  $(\mu_t)$ is continuous from $[0,T]$ to ${\mathcal P}_1(\R^d\times A)$.
\end{Theorem}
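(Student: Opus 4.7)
The plan is to apply Schauder's fixed-point theorem to a continuous self-map on a compact convex subset of $C([0,T];{\mathcal P}_1(\R^d\times A))$. For every $\mu$ in such a subset, I would first solve the Hamilton--Jacobi equation (i) of \eqref{eq:MFGgen} in the viscosity sense, obtaining $u=u[\mu]$; the uniform $C^2$ bounds on $H$ and on $g$ yield a unique solution that is Lipschitz in $x$ (uniformly in $t$ and $\mu$), together with semiconcavity estimates inherited from the regularity of $H$ in $(x,p)$. I would then solve the Kolmogorov equation (ii) with drift $-D_pH(t,\cdot,Du[\mu];\mu_t)$, using the probabilistic representation via the associated SDE with Lipschitz drift and bounded continuous (possibly degenerate) diffusion, obtaining a measure solution $m[\mu]$ with $L^\infty$ density and a uniform bound on its second moment. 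The map $\Phi$ would then be defined by
$$\Phi(\mu)_t \;:=\; \bigl(\mathrm{id},\,\alpha^*(t,\cdot,Du[\mu](t,\cdot);\mu_t)\bigr)\sharp m[\mu]_t.$$

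Using the linear growth of $\alpha^*$ in assumption (4) together with the second-moment control on $m[\mu]_t$, I would choose $R>0$ and a modulus of continuity $\omega$ (inherited from the boundedness of the drift in the Kolmogorov equation) so that the convex set
$$\mathcal{K}\;=\;\bigl\{\mu\in C([0,T];{\mathcal P}_1(\R^d\times A))\,:\,M_1(\mu_t)\le R,\;\;{\bf d}_1(\mu_t,\mu_s)\le\omega(|t-s|)\bigr\}$$
is invariant under $\Phi$. By Prokhorov combined with Arzelà--Ascoli, $\mathcal{K}$ is compact in $C([0,T];{\mathcal P}_1(\R^d\times A))$. A fixed point $\mu^*$ of $\Phi$ automatically satisfies (iv), while $(u[\mu^*],m[\mu^*])$ solves (i)--(iii); assumptions (2) and (3) (separate form of $b$ and Lasry--Lions monotonicity on $L$), via Lemma \ref{lem:PointFixe}, additionally guarantee that $\mu^*_t$ is the \emph{unique} solution of the implicit relation (iv) given $(m[\mu^*]_t,Du[\mu^*](t,\cdot))$, which is useful to identify $\mu^*$ unambiguously (though not strictly required for Schauder's argument).

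The main obstacle is the continuity of $\Phi$, which is delicate because the Kolmogorov drift involves $Du[\mu]$ while $u[\mu]$ is only a viscosity solution. For a sequence $\mu_n\to\mu$ in $\mathcal{K}$, stability of viscosity solutions combined with continuity of $H$ in its measure argument gives $u[\mu_n]\to u[\mu]$ locally uniformly; the uniform semiconcavity of the $u[\mu_n]$ then upgrades this to convergence of $Du[\mu_n]$ to $Du[\mu]$ almost everywhere. Since the densities $m[\mu_n]$ are uniformly bounded in $L^\infty$ they do not charge the Lebesgue-null set on which $u[\mu]$ fails to be differentiable, so the drift passes to the limit and $m[\mu_n]\to m[\mu]$ in $C([0,T];{\mathcal P}_1(\R^d))$. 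Continuity of $\alpha^*$ (assumption (4)) together with continuity of the pushforward operation then yields $\Phi(\mu_n)\to\Phi(\mu)$ in $C([0,T];{\mathcal P}_1(\R^d\times A))$, completing the Schauder argument.
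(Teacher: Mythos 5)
Your overall skeleton---Schauder's theorem on a compact convex set of curves in ${\mathcal P}_1(\R^d\times A)$, uniform Lipschitz and semi-concavity bounds for the value function, $L^\infty$ and second-moment bounds for the density, and a.e.\ convergence of gradients via semi-concavity to pass to the limit in the drift---is the same as the paper's. But there are two genuine gaps. First, the well-posedness of the Kolmogorov step: you solve \eqref{eq:Kolmo} ``via the associated SDE with Lipschitz drift'', yet the drift $-D_pH(t,\cdot,Du[\mu](t,\cdot);\mu_t)$ is \emph{not} Lipschitz---$u[\mu]$ is only a Lipschitz, semi-concave viscosity solution, so $Du[\mu]$ is a bounded, a.e.-defined field, and $\sigma$ is allowed to be degenerate. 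Neither existence nor, more importantly, \emph{uniqueness} of $m[\mu]$ follows from standard arguments, and without uniqueness $\Phi$ is not a well-defined single-valued map, so Schauder's theorem does not apply as stated. The paper circumvents exactly this by running the whole fixed-point argument for the regularized equation \eqref{eq.KolmoEps} with drift built from $u^\ep=u\star\xi^\ep$ (which is smooth, hence gives a unique $m$), obtaining a fixed point $\mu^\ep$ for each $\ep$, and only then letting $\ep\to0$ using the stability parts of Lemmas \ref{lem.Kolmo} and \ref{lem:PointFixe}. Your proposal has no analogue of this regularization-and-limit step.

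Second, the invariance of ${\mathcal K}$: you claim the time-modulus $\omega$ of $\Phi(\mu)$ is ``inherited from the boundedness of the drift in the Kolmogorov equation'', but that only controls the first marginal $t\mapsto m[\mu]_t$. The $\alpha$-component of the pushforward involves $Du[\mu](t,\cdot)$, so you need a modulus of continuity for $t\mapsto Du[\mu](t,\cdot)$ that is \emph{uniform over all admissible $\mu$}; this is precisely the content of the paper's Lemma \ref{lem:unifcont}, whose proof is a nontrivial compactness argument combining Barles' stability of $L^1$-viscosity solutions under weak-in-time convergence of the Hamiltonian with semi-concavity, and nothing in your write-up supplies it (the paper explicitly flags this as the main difficulty). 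In addition, because you feed the input $\mu_t$ directly into $\alpha^*$ instead of resolving the pointwise fixed point \eqref{def:mut} through the map $F$ of Lemma \ref{lem:PointFixe}, the time-modulus of $\Phi(\mu)$ also depends on the time-modulus of $\mu$ itself through the merely continuous (not Lipschitz) dependence of $\alpha^*$ on its measure argument; there is then no guarantee that a single $\omega$ satisfies $\Phi({\mathcal K})\subset{\mathcal K}$. The paper's construction $\tilde\mu_t=F(m_t,Du(t,\cdot))$ avoids this because $F$ is uniformly continuous on a compact product set and its two arguments have universal moduli.
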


As almost always the case for this kind of results, the argument of proof consists in applying a fixed point argument of Schauder type and requires therefore compactness properties. The main difficulty is the control of the time regularity of the evolving measure $\mu$. This regularity is related with the time regularity of the optimal controls. In the first order case, it is not an issue because the optimal trajectories satisfy a Pontryagin maximum principle and thus are (at least) uniformly $C^1$ (see \cite{gomes2014existence,gomes2016extended}). In the second order setting, the Pontryagin maximum principle is not so simple to manipulate and a similar regularity for the controls would be much more heavy to express. We use instead the full power of semi-concavity of the value function combined with compactness arguments (see Lemma \ref{lem:unifcont} below). 
The main advantage of our approach is its robustness: for instance stability property of the solution is almost straightforward. 

The proof of Theorem \ref{theo:existencegen} requires several preliminary remarks. Let us start with the fixed-point relation \eqref{def:mut}. 

\begin{Lemma}\label{lem:PointFixe} Let $m\in {\mathcal P}_2(\R^d)$ with a bounded density and $p\in L^\infty(\R^d, \R^d)$. 
\begin{itemize}
\item (Existence and uniqueness.)
There exists a unique fixed point $\mu=F(p,m) \in {\mathcal P}_1(\R^d\times A)$ to the relation
\begin{equation}\label{fixedpointrel}
\mu= (id, \alpha^*(t,\cdot, p(\cdot); \mu))\sharp m.
\end{equation}
Moreover, there exists a constant $C_0$, depending only on $\|p\|_\infty$ and on the second order moment of $m$, such that 
$$
\int_{\R^d\times A} \left\{|x|^2+\delta_A(\alpha_0, \alpha)\right\}\; d \mu(x,\alpha) \leq C_0.
$$ 
\item (Stability.) Let $(m_n)$ be a family  of $ {\mathcal P}_1(\R^d)$, with a uniformly bounded density in $L^\infty$ and uniformly bounded second order moment, which converges in ${\mathcal P}_1(\R^d)$ to some $m$, $(p_n)$ be a uniformly bounded family in $L^\infty$ which converges a.e. to some $p$. Then $F(p_n,m_n)$ converges to $F(p,m)$ in ${\mathcal P}_1(\R^d\times A)$. 
\end{itemize}
\end{Lemma}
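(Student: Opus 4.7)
The overall strategy is a Schauder-type fixed point argument for existence, combined with the Lasry--Lions monotonicity for uniqueness; stability then follows from a precompactness argument together with the uniqueness already established. To this end I introduce, for the fixed data $t,p,m$, the operator
$$
T(\mu):= \left(id,\alpha^*(t,\cdot,p(\cdot);\mu)\right)\sharp m,
$$
so that solving \eqref{fixedpointrel} amounts to finding a fixed point of $T$ on a suitable subset of ${\mathcal P}_1(\R^d\times A)$. Any $\mu$ in the image of $T$ has first marginal equal to $m$, and applying the linear growth hypothesis \eqref{Condunik} with $L=\|p\|_\infty$ yields
$$
\int_{\R^d\times A}\left(|x|^2+\delta_A(\alpha_0,\alpha)^2\right)d T(\mu) \le \int_{\R^d}|x|^2 dm + C_L^2 \int_{\R^d}(|x|+1)^2 dm \le C_0,
$$
with $C_0$ depending only on $\|p\|_\infty$ and on the second moment of $m$. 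This already establishes the claimed moment estimate and suggests the candidate set
$$
K:=\Bigl\{\mu\in {\mathcal P}_1(\R^d\times A):\ id_{\R^d}\sharp\mu=m,\ \int\left(|x|^2+\delta_A(\alpha_0,\alpha)^2\right) d\mu\le C_0\Bigr\},
$$
which is nonempty, convex, and---by Prokhorov plus the uniform second moment---compact in ${\mathcal P}_1(\R^d\times A)$. The inclusion $T(K)\subseteq K$ is immediate from the bound above, and the continuity of $T$ on $K$ follows from the continuity of $\alpha^*$ in its last argument (hypothesis \eqref{Condunik}): for $\mu_n\to\mu$ in ${\mathcal P}_1$ one has the pointwise convergence $\alpha^*(t,x,p(x);\mu_n)\to\alpha^*(t,x,p(x);\mu)$ for $m$-a.e.\ $x$, and the linear growth bound justifies dominated convergence against any $1$-Lipschitz test function. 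Schauder's theorem then produces a fixed point.

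\textbf{Uniqueness.} Suppose $\mu_1,\mu_2$ are both fixed points and set $\alpha^*_i(x):=\alpha^*(t,x,p(x);\mu_i)$, so that $\mu_i=(id,\alpha^*_i)\sharp m$. Because of the separated form of $b$ in hypothesis \eqref{hypb}, $\alpha^*_i(x)$ maximizes $\alpha\mapsto -p(x)\cdot b_1(t,x,\alpha)-L(t,x,\alpha;\mu_i)$; testing each optimality inequality against the other optimizer and adding the two inequalities produces, for every $x$,
$$
\bigl[L(x,\alpha^*_1;\mu_1)+L(x,\alpha^*_2;\mu_2)\bigr]-\bigl[L(x,\alpha^*_1;\mu_2)+L(x,\alpha^*_2;\mu_1)\bigr]\le 0.
$$
Integrating against $m$ and recognizing the left-hand side as $\int (L(\cdot;\mu_1)-L(\cdot;\mu_2))\,d(\mu_1-\mu_2)$, the Lasry--Lions monotonicity \eqref{hypLL} forces equality, hence equality $m$-a.e.\ in the two optimality inequalities. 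The uniqueness of the maximizer in hypothesis \eqref{Condunik} then yields $\alpha^*_1=\alpha^*_2$ $m$-a.e., and therefore $\mu_1=\mu_2$.

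\textbf{Stability and main obstacle.} For the stability part, the uniform $L^\infty$ bound on $p_n$ and the uniform bound on the second moment of $m_n$ (which follows from the uniform $L^\infty$ bound on the densities of $m_n$ and their convergence in ${\mathcal P}_1$) place all the $\mu_n:=F(p_n,m_n)$ in a common compact set $K'\subset {\mathcal P}_1(\R^d\times A)$. Extracting a subsequence $\mu_{n_k}\to\mu^\star$, one has to pass to the limit in the identity $\mu_{n_k}=(id,\alpha^*(t,\cdot,p_{n_k}(\cdot);\mu_{n_k}))\sharp m_{n_k}$ and conclude that $\mu^\star$ is a fixed point of the map associated with $(p,m)$; uniqueness then gives $\mu^\star=F(p,m)$ and hence convergence of the whole sequence by compactness of $K'$. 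The delicate step---and the main obstacle of the proof---is precisely this limit passage: since $p$ is only bounded measurable, the map $x\mapsto\alpha^*(t,x,p(x);\mu^\star)$ is not continuous in $x$, so one cannot directly appeal to weak convergence of $m_n$ against a continuous bounded integrand. The resolution is to exploit the two structural assumptions together: the joint continuity of $\alpha^*$ in $(p,\nu)$ plus the a.e.\ convergence $p_n\to p$ give pointwise convergence of the integrand, while the uniform $L^\infty$ bound on the densities $\rho_n$ of $m_n$ provides the equi-integrability needed for a Vitali-type dominated convergence with respect to $m_n$, together with the standard convergence of $\int \psi\, dm_n \to \int \psi\, dm$ for bounded continuous $\psi$.
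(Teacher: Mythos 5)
Your proposal is correct and follows essentially the same route as the paper: the same moment bound from the linear growth of $\alpha^*$, a Schauder fixed point on a compact convex subset of ${\mathcal P}_1(\R^d\times A)$ for existence, the Lasry--Lions monotonicity combined with the two optimality inequalities and uniqueness of the maximizer for uniqueness, and for stability a subsequence extraction with the limit passage justified by a.e.\ convergence of $p_n$, continuity of $\alpha^*$, and the uniform $L^\infty$ bound on the densities of $m_n$. The one step you flag as the ``main obstacle'' is handled in the paper exactly as you suggest, via $L^1_{loc}$ convergence of the integrand against the $L^\infty$-weak-$*$ convergent densities.
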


The uniqueness part is borrowed from \cite{carmonadelaruebook}. 

\begin{proof} Let $L:=\|p\|_\infty$. For $\mu\in {\mathcal P}_1(\R^d\times A)$, let us set $\Psi(\mu):=(id, \alpha^*(t,\cdot, p(\cdot); \mu))\sharp m$. Using the growth assumption on $\alpha^*$, we have
$$
\begin{array}{rl}
\ds \int_{\R^d\times A} |x|^2+\delta_A^2(\alpha_0,\alpha) \; d\Psi(\mu)(x,\alpha)\; = & \ds \int_{\R^d} |x|^2 + \delta_A^2(\alpha_0, \alpha^*(t,x, p(x); \mu)) \; m(x)dx \\
\leq & \ds  \int_{\R^d\times A} |x|^2+C_L^2(|x|+1)^2\;  m(x)dx =:C_0.
\end{array}
$$
This leads us to define ${\mathcal K}$ as the convex and compact subset of measures $\mu\in {\mathcal P}_1(\R^d\times A)$ with second order moment bounded by $C_0$. Let us check that the map $\Psi$ is continuous on ${\mathcal K}$. If $(\mu_n)$ converges to $\mu$ in ${\mathcal K}$, we have, for any map $\phi$, continuous and bounded on $\R^d\times A$:  
$$
\int_{\R^d\times A} \phi(x,\alpha) d\Psi(\mu_n)(x,\alpha) = 
\int_{\R^d} \phi(x,\alpha^*(t,x,p(x); \mu_n) m(x)dx.
$$
By continuity of $\alpha^*$, the term $\phi(x,\alpha^*(t,x,p(x); \mu_n))$ converges a.e. to $\phi(x,\alpha^*(t,x,p(x); \mu)$ and is bounded. The mesure $m$ having a bounded second order moment and being absolutely continuous, this implies the convergence of the integral: 
$$
\lim_n \int_{\R^d\times A} \phi(x,\alpha) d\Psi(\mu_n)(x,\alpha) = 
\int_{\R^d} \phi(x,\alpha^*(t,x,p(x); \mu)) m(x)dx
=
 \int_{\R^d\times A} \phi(x,\alpha) d\Psi(\mu)(x,\alpha).
$$
Thus the sequence $(\Psi(\mu_n))$ converges weakly to $\Psi(\mu)$ and, having a uniformly bounded second order moment, also converges for the ${\bf d}_1$ distance. The map $\Psi$ is continuous on the compact set ${\mathcal K}$ and therefore has a fixed point by Schauder fixed point Theorem. 

Let us now check the uniqueness. If there exists two fixed points $\mu_1$ and $\mu_2$  of \eqref{fixedpointrel}, then, by the monotonicity condition in Assumption \eqref{hypLL}, we have 
$$
\begin{array}{l}
\ds 0\leq \int_{\R^d\times A} \left\{L(t,x,\alpha;\mu_1)-L(t,x,\alpha;\mu_2)\right\}d(\mu_1-\mu_2)(x,\alpha) \\
\qquad \ds = \int_{\R^d} \left\{L(x,\alpha_1(x);\mu_1)-L(x,\alpha_1(x);\mu_2)-L(x,\alpha_2(x);\mu_1)+L(x,\alpha_2(x);\mu_2) \right\}m(x)dx
\end{array}
$$
where we have set $\alpha_i(x) :=\alpha^*(t,x,p(x);\mu_i)$ ($i=1,2$) and $L(x,\dots):= L(t,x,p(x),\dots)$ to simplify the expressions. 
So
$$
\begin{array}{l}
\ds 0\leq  \int_{\R^d} \left\{ b_1(t,x,\alpha_1(x))\cdot p(x)+L(x,\alpha_1(x);\mu_1)-b_1(t,x,\alpha_2(x))\cdot p(x)-L(x,\alpha_2(x);\mu_1) \right.\\
\qquad \qquad \ds \left. -b_1(t,x,\alpha_1(x))\cdot p(x)-L(x,\alpha_1(x);\mu_2)+b_1(t,x,\alpha_2(x)) \cdot p(x)+ L(x,\alpha_2(x);\mu_2) \right\} m(x)dx,
\end{array}
$$
 where, by assumption \eqref{Condunik}, $\alpha_1(x)$ is the unique maximum  point in the expression 
 $$
 -b_1(t,x,\alpha)\cdot p(x)-L(t,x,p(x), \alpha;\mu_1)
 $$ and $\alpha_2(x)$ the unique maximum point in the symmetric expression with $\mu_2$. This implies that $\alpha_1=\alpha_2$ $m-$a.e., and therefore, by the fixed point relation \eqref{fixedpointrel}, that $\mu_1=\mu_2$.  
 
 Finally we show the stability. In view of the previous discussion, we know that $(\mu_n:=F(p_n,m_n))$ has a uniformly bounded second order moment and thus converges, up to a subsequence, to some $\tilde \mu$. We just have to check that $\tilde \mu=F(p,m)$, i.e., $\tilde \mu$ satisfies the fixed-point relation. Let $\phi$ be a continuous and bounded map on $\R^d\times A$. Then 
 $$
 \int_{\R^d\times A} \phi(x,\alpha)d\mu_n(x,\alpha)= \int_{\R^d} \phi(x, \alpha^*(t,x,p_n(x); \mu_n))m_n(x)dx.
 $$
 As $(m_n)$ is bounded in $L^\infty$ and converges to $m$ in ${\mathcal P}_1(\R^d)$, it converges in $L^\infty-$weak-$*$ to $m$. The map $\phi$ being bounded and the $(m_n)$ having a uniformly bounded second order moment, we have therefore 
 $$
\lim_n  \int_{\R^d} \phi(x, \alpha^*(t,x,p(x); \mu))m_n(x)dx=  \int_{\R^d} \phi(x, \alpha^*(t,x,p(x); \mu))m(x)dx.
$$
On the other hand, by continuity of $\alpha^*$ and a.e. convergence of $(p_n)$,  $ \phi(x, \alpha^*(t,x,p_n(x); \mu_n))$ converges to 
$ \phi(x, \alpha^*(t,x,p(x); \mu))$ a.e. and thus in $L^1_{loc}$. As the $(m_n)$ are uniformly bounded in $L^\infty$ and have a uniformly bounded second order moment and as $\phi$ is bounded, this implies that 
$$
\lim_n  \int_{\R^d} \phi(x, \alpha^*(t,x,p_n(x); \mu_n))m_n(x)dx- \int_{\R^d} \phi(x, \alpha^*(t,x,p(x); \mu))m_n(x)dx=0.
$$
So we have  proved that 
$$
\lim_n  \int_{\R^d\times A} \phi(x,\alpha)d\mu_n(x,\alpha)= \int_{\R^d\times A} \phi(x,\alpha)d\mu(x,\alpha),
 $$
which implies that  $(\mu_n)$ converges to $\mu$ in ${\mathcal P}_1(\R^d\times A)$ because of the second order moment estimates on $(\mu_n)$. 
\end{proof} 

Next we address the existence of a solution to the Kolmogorov equation when the map $u$ is Lipschitz continuous and has some semi-concavity property in space. We will see in the proof of Theorem \ref{theo:existencegen} that this is typically the regularity for the solution of the Hamilton-Jacobi equation. 

\begin{Lemma}\label{lem.Kolmo} Assume that $u=u(t,x)$ is uniformly Lipschitz continuous in space with constant $M>0$ and semi-concave with respect to the space variable with constant $M$ and that $(\mu_t)$ is a time-measurable family in ${\mathcal P}_1(\R^d\times A)$. Then there exists at least one solution to the Kolmogorov equation \eqref{eq:Kolmo} which satisfies the bound
\begin{equation}\label{esti:bound}
\|m_t(\cdot)\|_\infty\leq \|\bar m_0\|_\infty\exp\{C_0t\},
\end{equation}
where $C_0$ is given by 
$$
C_0:=C_0(M)=\sup_{(t,x)}|D^2a(t,x)|+ \sup_{(t,x,p, \nu)} \left|D^2_{xp}H(t,x,p;\nu)\right|+ \sup_{(t,x,p,X,\nu)}{\rm Tr}\left(D^2_{pp}H(t,x,p;\nu)X\right)
$$
where the suprema are taken over the $(t,x,p,X,\nu)\in [0,T]\times \R^d\times \R^d\times {\mathcal S}^d\times {\mathcal P}_1(\R^d\times A)$ such that $\ |p|\leq M$ and $X\leq M\ Id$. Moreover, we have the second order moment estimate
\begin{equation}\label{est.:moment}
\int_{\R^d}|x|^2 m_t(x)dx \leq M_0
\end{equation}
and the continuity in time estimate
\begin{equation}\label{est.:conti}
\sup_{s,t} {\bf d}_1\left(m_{n,s}),m_{n,t}\right)\leq M_0|t-s|^{\frac12}\qquad \forall s,t\in [0,T], 
\end{equation}
where $M_0$ depends only on $\|a\|_\infty$, the second order moment of $\bar m_0$ and on $\sup_{t,x,p,\nu}|D_pH(t,x,p;\nu)|$, the supremum being taken over the $(t,x,p,\nu)\in [0,T]\times \R^d\times \R^d\times {\mathcal P}_1(\R^d\times A)$ such that $\ |p|\leq M$. \\ 

Assume now that $(u_n=u_n(t,x))$ is a family of continuous maps which are Lipschitz continuous and semi-concave in space with uniformly constant $M$ and converge locally uniformly to a map $u$; that $(\mu_n=\mu_{n,t})$ converges in $C^0([0,T], {\mathcal P}_1(\R^d\times A))$ to $\mu$. Let  $m_n$ be a solution to the Kolmogorov equation associated with $u_n$, $\mu_n$ which satisfies the $L^\infty$ bound \eqref{esti:bound}. Then $(m_n)$ converges, up to a subsequence, in $L^\infty-$weak*, to a solution $m$ of the Kolmogorov equation associated with $u$ and $\mu$. 
\end{Lemma}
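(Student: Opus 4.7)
The plan is to construct $m$ by regularization. Since $u$ is only Lipschitz and semi-concave in $x$, mollify in space, $u_\varepsilon := u \star_x \rho_\varepsilon \in C^\infty_x$; convolution preserves these bounds, so $\|Du_\varepsilon\|_\infty \leq M$ and $D^2 u_\varepsilon \leq M\,\mathrm{Id}$. Combined with a small elliptic perturbation $a \mapsto a + \varepsilon\,\mathrm{Id}$ of the diffusion, the Kolmogorov equation with drift $b_\varepsilon(t,x) := -D_pH(t,x,Du_\varepsilon(t,x);\mu_t)$ becomes uniformly parabolic with smooth-in-$x$ coefficients and admits a smooth non-negative solution $m_\varepsilon$ starting from $\bar m_0$. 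One then derives the bounds \eqref{esti:bound}--\eqref{est.:conti} uniformly in $\varepsilon$ and passes to the limit.

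The crux is the $L^\infty$ estimate. The key observation is that semi-concavity of $u$ and convexity of $H$ in $p$ together yield a lower bound on $\mathrm{div}\, b_\varepsilon$:
\[
\mathrm{div}_x\bigl(-D_pH(t,x,Du_\varepsilon;\mu_t)\bigr) = -\mathrm{tr}\bigl(D^2_{xp}H\bigr) - \mathrm{tr}\bigl(D^2_{pp}H\cdot D^2 u_\varepsilon\bigr) \geq -C_0,
\]
because for $A=D^2_{pp}H \geq 0$ and $B=D^2 u_\varepsilon \leq M\,\mathrm{Id}$ one has $\mathrm{tr}(AB)=\mathrm{tr}(A^{1/2}BA^{1/2}) \leq M\,\mathrm{tr}(A)$---which is exactly the term appearing in the definition of $C_0$. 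Since we only have a \emph{lower} bound on $\mathrm{div}\,b_\varepsilon$, the naive maximum principle does not apply; one instead uses a Lagrangian/stochastic-flow argument for the regularized SDE $dX_t = b_\varepsilon\,dt + \sigma_\varepsilon\,dW_t$: the Jacobian of the flow is controlled below by $e^{-C_0 t}$, so the push-forward of $\bar m_0$ has density bounded by $\|\bar m_0\|_\infty\exp\{C_0 t\}$ uniformly in $\varepsilon$. This is the main obstacle---turning a divergence bounded below into a density bounded above---and the reason why a Lagrangian viewpoint must be invoked.

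The moment bound \eqref{est.:moment} then follows from testing the equation against $|x|^2$ (using $\|b_\varepsilon\|_\infty \leq \sup_{|p|\leq M}|D_pH|$ and $\|a\|_\infty<\infty$), and the time-continuity estimate \eqref{est.:conti} from testing against any $1$-Lipschitz $\phi$: the drift contributes $O(|t-s|)$ and the second-order term $O(|t-s|^{1/2})$ after Cauchy-Schwarz using the moment bound. With these uniform estimates, $(m_\varepsilon)$ is precompact both in $L^\infty$-weak-$*$ and in $C^0([0,T];{\mathcal P}_1(\R^d))$; extract a cluster point $m$. The remaining delicate step is to pass to the limit in the nonlinear drift, which is where semi-concavity is used a second time: a semi-concave function is differentiable almost everywhere and its mollifications satisfy $Du_\varepsilon \to Du$ at every point of differentiability of $u$. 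Coupled with continuity of $D_pH$ and the uniform $L^\infty$ bound on $Du_\varepsilon$, dominated convergence gives $D_pH(\cdot,Du_\varepsilon;\mu_\cdot) \to D_pH(\cdot,Du;\mu_\cdot)$ in $L^1_{\mathrm{loc}}$, which pairs with the $L^\infty$-weak-$*$ convergence of $m_\varepsilon$ to pass to the limit against smooth compactly supported test functions. The three uniform bounds transfer to $m$ by lower semi-continuity. The stability part follows by the very same argument applied directly to $(u_n,\mu_n,m_n)$: the uniform Lipschitz and semi-concavity constants on $(u_n)$ yield a.e.\ convergence $Du_n \to Du$, the a priori bounds on $(m_n)$ give the required compactness, and continuity of $D_pH$ in its $\nu$-argument together with ${\mathbf d}_1$-convergence of $\mu_n$ closes the limit in the distributional formulation of \eqref{eq:Kolmo}.
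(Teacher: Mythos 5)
Your overall architecture (mollify $u$, solve a regularized Fokker--Planck equation, derive the three estimates uniformly, pass to the limit using a.e.\ convergence of $Du_\varepsilon$ from semi-concavity, and run the same argument for stability) matches the paper's proof, and your treatment of the moment bound, the time-continuity estimate and the limit passage is sound --- the paper obtains \eqref{est.:moment} and \eqref{est.:conti} from the SDE representation of $m_n$ rather than by testing the PDE, but that is an equivalent route. The problem is the central $L^\infty$ estimate, which is exactly the crux you identify and where your argument breaks down.

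First, your reason for abandoning the maximum principle is mistaken. Expanding the divergence-form equation into non-divergence form gives
$\partial_t m -{\rm Tr}(aD^2m)+ b\cdot Dm - c\,m=0$ with
$c=\sum_{ij}\partial^2_{ij}a_{ij}+{\rm Tr}\bigl(D^2_{xp}H\bigr)+{\rm Tr}\bigl(D^2_{pp}H\,D^2u_\varepsilon\bigr)$,
and the bounds $D^2u_\varepsilon\leq M\,I_d$, $D^2_{pp}H\geq 0$ give precisely $c\leq C_0$. An \emph{upper} bound on the zeroth-order coefficient is exactly what the parabolic comparison principle requires: $m_\varepsilon e^{-C_0t}$ is a subsolution of an equation with nonpositive zeroth-order term, whence $\sup_x m_\varepsilon(t,x)\leq\|\bar m_0\|_\infty e^{C_0t}$. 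This is the paper's argument, and it is the ``naive'' maximum principle applied correctly. Second, the Lagrangian substitute you propose is not valid as stated: for a stochastic flow with $x$-dependent (and here degenerate) $\sigma$, the pathwise Jacobian solves a linear SDE driven by $Db$ \emph{and} $D\sigma$, so it is not bounded below by $e^{-C_0t}$ from a lower bound on ${\rm div}\,b$ alone; moreover the law of $X_t$ is an expectation over random push-forwards, not a single push-forward, so a pathwise Jacobian bound does not directly bound the density. A symptom of the gap is that your argument never sees the term $\sup|D^2a|$ in the stated constant $C_0$, which in the paper enters exactly through $\sum_{ij}\partial^2_{ij}a_{ij}$ in the zeroth-order coefficient. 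To repair the proof, replace the flow argument by the non-divergence rewriting and the maximum principle.
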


\begin{proof} We first prove the existence of a solution. Let $(u_n)$ be  smooth approximations of $u_n$. Then equation 
$$
\left\{\begin{array}{l}
\partial_t m_t(x) -\sum_{i,j} \partial_{ij} (a_{n,ij}(t,x)m_t(x))-\dive \left( m_t(x)D_pH(t,x,Du_n(t,x);\mu_t)\right)=0\; {\rm in} \; (0,T)\times \R^d\\
m_0(x)= \bar m_0(x)  \qquad {\rm in} \; \R^d
\end{array}\right.
$$
has a unique classical solution $m_n$, which is the law of the process 
$$
\left\{\begin{array}{l}
dX_t =-D_pH(X_t,Du_n(t,X_t);\mu_t)dt + \sigma(t,X_t)dW_t\\
X_{0}={\bf x_0}
\end{array}\right.
$$
where ${\bf x_0}$ is a random variable with law $\bar m_0$  independent of $W$. 
By standard argument, we have therefore that 
$$
\sup_{t\in [0,T]} \int_{\R^d}|x|^2 m_{n,t}(dx) \leq \Esp \left[ \sup_{t\in [0,T]}|X_t|^2\right]\leq M_0,
$$
where $M_0$ depends only on $\|a\|_\infty$, the second order moment of $\bar m_0$ and on $\sup_{t,x,p,\nu}|D_pH(t,x,p;\nu)|$, the supremum being taken over the $(t,x,p,\nu)\in [0,T]\times \R^d\times \R^d\times {\mathcal P}_1(\R^d\times A)$ such that $\ |p|\leq M$.
Moreover, 
$$
 {\bf d}_1\left(m_n(s),m_n(t)\right)\leq \sup_{s,t} \Esp \left[ |X_t-X_s|\right]\leq M_0|t-s|^{\frac12}\qquad \forall s,t\in [0,T],
$$
changing $M_0$ if necessary. 
The key step is that $(m_n)$ is bounded in $L^\infty$. For this we rewrite the equation for $m_n$ as a linear equation in non-divergence form 
$$
\partial_t m_n -{\rm Tr}\left(a D^2m_n\right)+ b_n\cdot Dm_n - c_nm_n=0
$$
where $b_n$ is a some bounded vector field and $c_n$ is given by 
$$
c_n(t,x)= \sum_{i,j} \partial^2_{ij}a_{n,ij}(t,x)+ {\rm Tr}\left(D^2_{xp}H(t,x,Du_n(t,x);\mu_t)\right)+ {\rm Tr}\left(D^2_{pp}H(t,x,Du_n(t,x);\mu_t)D^2u_n(t,x)\right). 
$$
As $u$ is uniformly Lipschitz continuous and semi-concave with respect to the $x-$variable, we can assume that $(u_n)$ enjoys the same property, so that 
$$
\|Du_n\|_\infty \leq M, \qquad D^2u_n\leq M\ I_d.
$$
Then 
$$
|D^2a_n(t,x)|+\left|D^2_{xp}H(t,x,Du_n(t,x);\mu_t)\right|+ {\rm Tr}\left(D^2_{pp}H(t,x,Du_n(t,x);\mu_t)D^2u_n(t,x)\right)\leq C_0
$$
because $D^2_{pp}H\geq 0$ by convexity of  $H$. This proves that 
$$
c_n(t,x)\leq C_0 \qquad {\rm a.e.}
$$
By standard maximum principle, we have therefore that 
$$
\sup_{x} m_n(t,x)\leq \sup_{x} \bar m_0(x)\exp(C_0t)\qquad \forall t\geq0,
$$
which proves the uniform bound of $m_n$ in $L^\infty$. Thus $(m_n)$ converges weakly-* in $L^\infty$ to some map $m$ satisfying (in the sense of distribution)
$$
\left\{\begin{array}{l}
\partial_t m -\sum_{i,j} \partial_{ij} (a_{ij}m)-\dive \left( mD_pH(t,x,Du;\mu_t)\right)=0\; {\rm in} \; (0,T)\times \R^d\\
m_0(x)=\bar  m_0(x)  \qquad {\rm in} \; \R^d
\end{array}\right.
$$
This shows the existence of a solution. 
The proof of the stability goes along the same line, except that there is no need of regularization. 
\end{proof}

Next we need to show some uniform regularity in time of $Du$ where $u$ is a solution of the Hamilton-Jacobi equation. For this, let us note that the set
\begin{equation}\label{def:mathcalD}
{\mathcal D}:=\{ p\in L^\infty(\R^d), \; \exists v\in W^{1,\infty}(\R^d), \; p=Dv, \; \|v\|_\infty\leq M, \; \|Dv\|_\infty\leq M, \; D^2v\leq M\ I_d\}
\end{equation}
is sequentially compact for the a.e. convergence and therefore for the distance 
\begin{equation}\label{def:distonD}
d_{{\mathcal D}}(p_1,p_2)= \int_{\R^d} \frac{|p_1(x)-p_2(x)|}{(1+|x|)^{d+1}}dx\qquad \forall p_1,p_2\in {\mathcal D}. 
\end{equation}

\begin{Lemma}\label{lem:unifcont} There is a modulus $\omega$ such that, for any $(\mu_t)\in C^0([0,T], {\mathcal P}_1(\R^d\times A))$, the viscosity solution  $u$ to \eqref{eq:HJu} satisfies 
$$
d_{\mathcal D}(Du(t_1,\cdot), Du(t_2,\cdot)) \leq \omega (|t_1-t_2|)\qquad \forall t_1,t_2\in[0,T].
$$
\end{Lemma}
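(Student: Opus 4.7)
The plan is to combine uniform regularity of the viscosity solution $u$ of \eqref{eq:HJu}---Lipschitz and semi-concave in space, and Lipschitz in time, all uniformly in $(\mu_t)$---with a compactness-and-contradiction argument that exploits the stability of super-differentials of semi-concave functions.

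First I would derive from the standing assumptions the following estimates on $u$, all uniform with respect to $(\mu_t)$: the sup-bound $\|u\|_\infty\leq C$ by comparison with constants, using boundedness of $H$ and $g$; the Lipschitz bound $\|Du(t,\cdot)\|_\infty\leq M$ based on the uniform $C^1$ bound on $H$ in $(x,p)$ and on $g$; the semi-concavity bound $D^2u(t,\cdot)\leq M\, I_d$ using the uniform $C^2$ bound on $H$ in $(x,p)$, the semi-concavity of $g$, the $C^2$ bound on $\sigma$, and the convexity of $H$ in $p$; and finally the time-Lipschitz bound $|u(t_1,y)-u(t_2,y)|\leq K|t_1-t_2|$ obtained by comparison with cone barriers of the form $(\tau,x)\mapsto u(s,y)+M|x-y|+K|\tau-s|$ for $K$ sufficiently large. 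In particular $Du(t,\cdot)\in\mathcal{D}$ for every $t$.

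I then argue by contradiction. Assume there exist $\delta>0$, a sequence $(\mu^n_t)\in C^0([0,T],\mathcal{P}_1(\R^d\times A))$ and times $t_1^n,t_2^n\in[0,T]$ with $|t_1^n-t_2^n|\to 0$ such that, writing $u_n$ for the associated viscosity solutions,
$$
d_{\mathcal{D}}\bigl(Du_n(t_1^n,\cdot),Du_n(t_2^n,\cdot)\bigr)\;\geq\;\delta.
$$
The uniform bounds above make $(u_n)$ equicontinuous on $[0,T]\times\R^d$, so Arzelà-Ascoli yields a subsequence converging locally uniformly to a limit $u$ which inherits the Lipschitz and semi-concavity bounds in $x$; passing to a further subsequence, $t_1^n,t_2^n\to t^*\in[0,T]$. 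The key claim is that $Du_n(t_i^n,\cdot)\to Du(t^*,\cdot)$ in $d_{\mathcal{D}}$ for $i=1,2$, which contradicts the displayed inequality.

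The hard part is this last gradient-convergence step. It rests on the classical stability of super-differentials of semi-concave functions: if $(u_n)$ converges locally uniformly to $u$ with a common semi-concavity constant in $x$, if $u(t^*,\cdot)$ is differentiable at $x$, and if $u_n(t_n,\cdot)$ is differentiable at $x_n$ with $(t_n,x_n)\to(t^*,x)$, then $Du_n(t_n,x_n)\to Du(t^*,x)$---because any accumulation point of $Du_n(t_n,x_n)$ lies in the super-differential $D^+u(t^*,x)$, which at a point of differentiability is reduced to the singleton $\{Du(t^*,x)\}$. Since $u(t^*,\cdot)$ and each $u_n(t,\cdot)$ are differentiable almost everywhere (being semi-concave in $x$), applying this with $x_n=x$ at a.e.\ $x$ yields $Du_n(t_i^n,x)\to Du(t^*,x)$ pointwise a.e. Combined with the uniform bound $|Du_n|\leq M$, dominated convergence against the integrable weight $(1+|x|)^{-d-1}$ then delivers $d_{\mathcal{D}}(Du_n(t_i^n,\cdot),Du(t^*,\cdot))\to 0$ for $i=1,2$, closing the contradiction and producing the desired uniform modulus $\omega$.
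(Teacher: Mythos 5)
Your overall architecture coincides with the paper's: argue by contradiction, extract a locally uniformly convergent subsequence $u_n\to u$ with $u_n(t_1^n,\cdot)$ and $u_n(t_2^n,\cdot)$ converging to the \emph{same} function $u(t^*,\cdot)$, upgrade this to a.e.\ convergence of the gradients via the stability of super-differentials of uniformly semi-concave functions, and conclude by dominated convergence against the weight $(1+|x|)^{-d-1}$. The last two steps are exactly those of the paper and are fine. The gap is in how you produce the locally uniform convergence. You invoke Arzel\`a--Ascoli, which requires equicontinuity of $(u_n)$ in \emph{time}, uniformly over the whole class $(\mu_t)\in C^0([0,T],{\mathcal P}_1(\R^d\times A))$, and you justify this by a claimed time-Lipschitz bound obtained from cone barriers $(\tau,x)\mapsto u(s,y)+M|x-y|+K|\tau-s|$. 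This does not go through as stated. First, for the degenerate second-order equation \eqref{eq:HJu} the cone $M|x-y|$ is not a usable barrier (the term ${\rm tr}(a\,D^2|x-y|)$ has the wrong sign and blows up at $x=y$), and in any case the diffusion should at best give H\"older-$\tfrac12$, not Lipschitz, regularity in time. Second, and more seriously, any comparison-based propagation of time regularity to interior times (e.g.\ comparing $u(\cdot+h,\cdot)$ with $u$) requires a modulus of continuity for $t\mapsto H(t,x,p;\mu_t)$, hence for $t\mapsto\mu_t$, \emph{uniform over the class of admissible $(\mu_t)$} --- and no such uniform modulus is available, since $(\mu_t)$ is only assumed continuous. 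A probabilistic/DPP argument could substitute, but it needs uniform bounds on $L$ and $b$ along a fixed control, which are not among the stated hypotheses. So the equicontinuity-in-time claim, which carries the whole weight of your compactness step, is unproved.

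This is precisely the difficulty the paper's proof is engineered to avoid: instead of seeking time-equicontinuity of the $u_n$, it uses the uniform $C^2$ bounds on $H$ in $(x,p)$ to extract a limit Hamiltonian $h(t,x,p)$ in the \emph{weak-in-time} sense (i.e.\ $\int_0^t H(s,x,p;\mu_{n,s})\,ds$ converges locally uniformly), and then invokes Barles' stability theorem for $L^1$-viscosity solutions under weak-in-time convergence of the Hamiltonian to conclude that $u_n\to u$ locally uniformly, where $u$ is the $L^1$-viscosity solution associated with $h$. From there the argument proceeds exactly as in your last paragraph. To repair your proof you would either need to supply the missing uniform time-modulus for $u$ (which requires additional structural assumptions, or a genuinely different argument), or replace the Arzel\`a--Ascoli step by this $L^1$-stability device.
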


\begin{proof}  Suppose that the result does not hold. Then there exists $\epsilon>0$ such that, for any $n\in {\mathbb N} \backslash \{0\}$, there is $\mu_{n}\in C^0([0,T], {\mathcal P}_1(\R^d\times A))$ and times $t_n\in [0,T-h_n]$, $h_n\in [0,1/n]$ with 
$$
d_{\mathcal D}(Du_n(t_n,\cdot),Du_n(t_n+h_n,\cdot))\geq \ep, 
$$
where $u_n$ is the solution to \eqref{eq:HJu} associated with $\mu_{n}$. In view of our regularity assumption on $H$, the map $(x,p)\to H(t, x,p; \mu_{n,t})$ is locally uniformly bounded in $C^2$  independently of $t$ and $n$. So, there exists a time-measurable Hamiltonian $h=h(t,x,p)$, obtained as a weak limit of  $H(\cdot,\cdot,\cdot; \mu_{n,\cdot})$, such that, up to a subsequence,  $\int_0^t H(s, x,p; \mu_{n,s})ds$ converges locally uniformly in $x,p$ to $\int_0^t h(s,x,p)ds$. Note that $h$ inherits the regularity property of $H$ in $(x,p)$. Using the notion of $L^1-$viscosity solution and Barles' stability result of $L^1-$viscosity solutions for the weak (in time) convergence of the Hamiltonian \cite{barles2006new}, we can deduce that the $u_n$ converge locally uniformly to the $L^1-$viscosity solution $u$ of the Hamilton-Jacobi equation \eqref{eq:HJu}  associated with the Hamiltonian $h$. Without loss of generality, we can also assume that $(t_n)$ converges to some $t\in [0,T]$. 
Then, by semi-concavity,  $Du_n(t_n,\cdot)$ and $Du_n(t_n+h_n,\cdot)$ both converge a.e. to $Du(t,\cdot)$ because $u_n(t_n,\cdot)$ and $u_n(t_n+h_n,\cdot)$ both converge to $u(t,\cdot)$ locally uniformly. As the $Du_n$ are uniformly bounded in $L^\infty$, we conclude that 
$$
d_{\mathcal D}(Du_n(t_n,\cdot),Du_n(t_n+h_n,\cdot))\to 0, 
$$
and there is a contradiction. 
\end{proof}

\begin{proof}[Proof of Theorem \ref{theo:existencegen}] We proceed as usual by a fixed point argument. We first solve an approximate problem, in which we smoothen the Komogorov equation, and then we pass %of 
\answca{to} the limit. Let $\ep>0$ small, $\xi^\ep=\ep^{-(d+1)}\xi((t,x)/\ep)$ be a standard smooth mollifier. 

 To any $\mu=(\mu_t)$ belonging to $C^0([0,T],{\mathcal P}_1(\R^d\times A))$ we associate the unique viscosity solution $u$ to \eqref{eq:HJu}. Note that, with our assumption on $H$ and $G$, $u$ is uniformly bounded, uniformly continuous in $(t,x)$ (uniformly with respect to $(\mu_t)$), uniformly Lipschitz continuous and semi-concave in $x$ (uniformly with respect to $t$ and $(\mu_t)$). We denote by $M$  the uniform bound on the $L^\infty-$norm, the Lipschitz constant and semi-concavity constant: 
\begin{equation}\label{defMdefM}
\|u\|_\infty\leq M, \qquad \|Du\|_\infty\leq M, \qquad D^2u\leq M\ I_d.
\end{equation}
Then we consider $(m_t)$ to be the unique solution to the (smoothened) Kolmogorov equation  
\begin{equation}\label{eq.KolmoEps}
\left\{\begin{array}{l}
\partial_t m_t(x) -\sum_{i,j} \partial_{ij} (a_{ij}(t,x)m_t(x))-\dive \left( m_t(x)D_pH(t,x,Du^\ep(t,x);\mu_t)\right)=0 \\
\qquad \qquad \qquad \qquad \qquad \qquad \qquad \qquad \qquad \qquad \qquad \qquad \qquad \qquad  {\rm in} \;  (0,T)\times \R^d\\
m_0(x)= \bar m_0(x)  \qquad {\rm in} \; \R^d
\end{array}\right.
\end{equation}
where $u^\ep= u\star \xi^\ep$. Following Lemma \ref{lem.Kolmo}, the solution $m$---which is unique thanks to the space regularity of the drift---satisfies the bounds \eqref{esti:bound}, \eqref{est.:moment} and \eqref{est.:conti} (which are independent of $\mu$ and $\ep$).  Finally, we set $\tilde \mu_t= F(m_t, Du(t,\cdot))$, where $F$ is defined in Lemma \ref{lem:PointFixe}.  From Lemma \ref{lem:PointFixe}, we know that there exists $C_0>0$ (still independent of $\mu$ and $\ep$) such that 
$$
\sup_{t\in [0,T]} \int_{\R^d\times A} |x|^2+\delta_A(\alpha_0, \alpha)\; d\tilde \mu_t(x,\alpha) \leq C_0. 
$$

Our aim is to check that the map $\Psi^\ep:(\mu_t)\to (\tilde \mu_t)$ has a fixed point. Let us first prove that $(\tilde \mu_t)$ is uniformly continuous in $t$ with a modulus $\omega$ independent of $(\mu_t)$ and $\ep$. Recall that the set ${\mathcal D}$ defined by \eqref{def:mathcalD} is compact. 
Moreover, the subset ${\mathcal M}$   of measures  in ${\mathcal P}_1(\R^d)$ with a second moment bounded by a constant $C_0$ is also compact. By Lemma \ref{lem:PointFixe}, the map $F$ is continuous on the compact set ${\mathcal D}\times {\mathcal M}$, and thus uniformly continuous. On the other hand, Lemma \ref{lem:unifcont} states that the map $t\to Du(t,\cdot)$ is continuous from $[0,T]$ to ${\mathcal D}$, with a modulus independent of $(\mu_t)$. 
 As $(m_t)$ is also uniformly continuous in time (recall \eqref{est.:conti}), we deduce that $(\tilde \mu_t:= F(m_t, Du(t,\cdot))$ is uniformly continuous with a modulus $\omega$ independent of $(\mu_t)$ and $\ep$. 

Let ${\mathcal K}$ be the set of $\mu\in C^0([0,T], {\mathcal P}(\R^d\times A))$ with a second order moment bounded by $C_0$ and
a modulus of time-continuity $\omega$. Note that ${\mathcal K}$ is convex and compact and that  $\Psi^\ep$  is defined from ${\mathcal K}$ to ${\mathcal K}$. 

Next we show that the map $\mu\to \Psi^\ep(\mu)$ is continuous on ${\mathcal K}$. Let $(\mu_{n})$ converge to $\mu$  in ${\mathcal K}$. From the stability of viscosity solution, the solution $u_n$  of \eqref{eq:HJu} associated to $\mu_{n}$ converge locally uniformly to the solution $u$ associated with $\mu$. Moreover, recalling the estimates \eqref{defMdefM}, the $(Du_n)$ are uniformly bounded and converge a.e. to $Du$ by semi-concavity. Let $m_{n}$ be the solution to \eqref{eq.KolmoEps} associated with $\mu_n$ and $u_n$. By the stability part of Lemma \ref{lem.Kolmo}, any subsequence of the compact family $(m_n)$ converges uniformly to a solution  \eqref{eq.KolmoEps}. This solution $m$ being unique,  the full sequence $(m_n)$ converges to $m$. By continuity of the map $F$, we then have, for any $t\in [0,T]$,  
$$
\Psi^\ep(\mu_n)(t)=F(m_t, Du(t,\cdot))= \lim_n F(m_{n,t}, Du_n(t,\cdot))= \lim_n \Psi^\ep(\mu_n)(t).
$$ 
Since the  $(m_{n})$ and $(Du_n)$ are uniformly continuous in time, the convergence of $\Psi^\ep(\mu_n)$ to $\Psi^\ep(\mu)$ actually holds in ${\mathcal K}$. 

So, by Schauder fixed point Theorem, $\Psi^\ep$ has a fixed point $\mu^\ep$. It remains to let $\ep\to0$. Up to a subsequence, labelled in the same way, $(\mu^\ep)$ converges to some $\mu\in {\mathcal K}$. As above, the solution $u^\ep$ of \eqref{eq:HJu} associated with $\mu^\ep$ converges to the solution $u$ associated with $\mu$ and $Du^\ep$ converges to $Du$ a.e.. The solution $m^\ep$ of \eqref{eq.KolmoEps} (with $\mu^\ep$ and $u^\ep$) satisfies the estimates \eqref{esti:bound}, \eqref{est.:moment} and \eqref{est.:conti}. Thus the stability result in Lemma \ref{lem.Kolmo} implies that a subsequence, still denoted $(m^\ep)$, converges uniformly to a solution $m$ of the unperturbed Kolmogorov equation \eqref{eq:Kolmo}. Recall that $\mu^\ep_t= F(m^\ep_t, u^\ep(t,\cdot))$ for any $t\in [0,T]$. We can pass to the limit in this expression to get  $\mu_t= F(m_t, u(t,\cdot))$ for any $t\in [0,T]$. Then the triple $(u,m,\mu)$ satisfies system \eqref{eq:MFGgen}. 
\end{proof}

%%%%%%%%%%%%%%%%%%%%%%%%%%%%%%%%%%%%%%%%%%%%%%%
\section{Conclusion}

In this paper, we proposed a model for optimal execution or optimal trading in which, instead of having as usual a large trader in front of a neutral ``background noise'',  the trader is surrounded by a continuum of other market participants. The cost functions of these participants have the same shape, without necessarily sharing the same values of parameters.
Each player of this mean field game (MFG) is similar in the following sense: it suffers from temporary market impact, impacts permanently the price, and fears uncertainty.

The stake of such a framework is to provide robust trading strategies to asset managers, and to shed light on the price formation process for regulators.

Our framework is not a traditional MFG, but falls into the class of \emph{extended mean filed games}, in which participants interact via their controls. We provide a generic framework to address it for a vast class of cost functions, beyond the scope of our model. 

Thanks to it, we solved our model and provide insights on the influence of its parameters (temporary and permanent market impact coefficients, terminal penalization, risk aversion and duration of the game) on the obtained results. We provide the solution in a closed form and formulate a series of ``stylized facts'' (Stylized Fact \ref{prop:smallalpha} to Stylized Fact \ref{prop:alone}) describing our results. For instance we unveil three components of the optimal control: two coming from the mean field $E(t)$ and its derivative $E'(t)$ (summarized in a function $h_1(t)$), and the third one proportional to the remaning quantity to trade $q(t)$ via an increasing function $h_2(t)$.
We show also how to slow down trading when the net inventory of the participants is of the opposite sign (i.e. the ``market'' is buying while the trader is selling, or the reverse): $h_2(t)$ is unchanged but $h_1(t)$ is changed in its opposite.

\answca{To conclude on this, we provide numerical illustrations showing market participants could end up not following their initial instructions, for some configurations of the market structure. This could help regulators to smooth such behaviours if needed.}

In a second stage, we address the case of heterogenous preferences (i.e. when each agent has his own risk aversion parameter). We show the existence of an unique solution but do no more have a closed form formulation.
Last but not least we study a more realistic case in which participants do not know instantaneously the optimal strategies of others, but have \emph{to learn them}. We list in Proposition \ref{prop:learning} conditions needed so that the learnt strategy is the optimal one.

%%%%%%%%%%%%%%%%%%%%%%%%%%%%%%%%%%%%%%%%%%%%
\appendix
\section{Proof of Proposition \ref{prop:E}}
\label{sec:app:prop:E}

\begin{proof}[Proof of Proposition \ref{prop:E}] 

The discriminant of the second order equation is
$$
\Delta = \alpha^2+16\kappa\phi
$$
and the roots are
$$
r_\pm = -\frac{\alpha}{4\kappa} \pm\frac{1}{\kappa} \sqrt{\kappa \phi+\frac{\alpha^2}{16}}.
$$
Hence 
$$
E(t)= E_0a \left( \exp\{r_+t\}-\exp\{r_-t\}\right)+ E_0\exp\{r_-t\}
$$
where $a\in \R$  determined by the condition 
$$
 \kappa  E'(T) + AE(T)=0
 $$
 and thus has to solve the relation 
$$
\begin{array}{l}
\ds \kappa E_0\left[a \left(r_+ \exp\{r_+T\}-r_-\exp\{r_-T\}\right)+ r_-\exp\{r_-T\}\right] \\
\qquad \qquad \qquad \ds +A E_0\left[a \left( \exp\{r_+T\}-\exp\{r_-T\}\right)+ \exp\{r_-T\}\right]=0.
\end{array}
$$
There is a unique solution if $E_0\neq 0$ and 
\begin{equation}\label{eq:condcond}
\kappa \left(r_+ \exp\{r_+T\}-r_-\exp\{r_-T\}\right)+A \left( \exp\{r_+T\}-\exp\{r_-T\}\right)\neq 0.
\end{equation}
Writing $\ds r^\pm = -\frac{\alpha}{4\kappa} \pm \theta$ where $\theta:=\frac{1}{\kappa} \sqrt{\kappa\phi+\frac{\alpha^2}{16}}$, condition \eqref{eq:condcond} is equivalent to 
$$
\left[(-\frac{\alpha}{4}+\kappa\theta) \exp\{\theta T\}-(-\frac{\alpha}{4}-\kappa\theta)\exp\{-\theta T\}\right]+A \left[ \exp\{\theta T\}-\exp\{-\theta T\}\right]\neq 0,
$$
which leads to the condition
$$
-\frac{\alpha}{2}{\rm sh}\{\theta T\} +2\kappa\theta{\rm ch}\{\theta T\}+2A {\rm sh}\{\theta T\}\neq 0.
$$
As ${\rm ch}\{\theta T\}> {\rm sh}\{\theta T\}$, one has 
$$
\begin{array}{l}
\ds -\frac{\alpha}{2}{\rm sh}\{\theta T\} +2\kappa\theta{\rm ch}\{\theta T\}+2A {\rm sh}\{\theta T\}\; > \; \ds  {\rm sh}\{\theta T\} \left( -\frac{\alpha}{2}+2\kappa\theta+2A \right)\\
\qquad\qquad  =  \ds {\rm sh}\{\theta T\} \left( -\frac{\alpha}{2}+2\left(\kappa\phi+\frac{\alpha^2}{16}\right)^{1/2}+2A \right)\;  \geq \;2A {\rm sh}\{\theta T\}\; >\; 0.
\end{array}
$$
So condition \eqref{eq:condcond} is always fulfilled and
$$
\begin{array}{rl}
\ds a\; = & \ds - \frac{\kappa r_- \exp\{r_- T\} +A \exp\{ r_- T\}}{\kappa \left(r_+ \exp\{r_+T\}-r_-\exp\{r_-T\}\right)+A \left( \exp\{r_+T\}-\exp\{r_-T\}\right)} \\
= & \ds \ds - \frac{(-\alpha/4-\kappa\theta+A) \exp\{-\theta T\} }{-\frac{\alpha}{2}{\rm sh}\{\theta T\} +2\kappa\theta{\rm ch}\{\theta T\}+2A {\rm sh}\{\theta T\}}.
\end{array}
$$

To compute $h_2$, we note that it solves the following backward ordinary differential equation \eqref{eq:dh2}:
$$\left\{\begin{array}{rcl}
    0 &=& 2\kappa \cdot h'_2(t) + 4 \kappa \cdot \phi - (h_2(t))^2\\
    h_2(T) &=& 2A
\end{array}\right.$$
It is easy to check the solution is given by \eqref{eq:h2:gen}, where 
where $r=2\sqrt{\phi/\kappa}$ and $c_2$ solves the terminal condition. Hence
$$c_2=\frac{1 - A/\sqrt{\kappa \phi}}{1 + A/\sqrt{\kappa\phi}} \cdot e^{-rT}.$$
\end{proof}

%%%%%%%%%%%%%%%%%%%%%%%%%%%%%%%%%%%%%%%%%%%%%%%
\if t\vMAFE %

\else

\fi


\begin{thebibliography}{}

\bibitem[Alfonsi and Blanc, 2014]{citeulike:13125577}
Alfonsi, A. and Blanc, P. (2014).
\newblock {Dynamic optimal execution in a mixed-market-impact Hawkes price
  model}.

\bibitem[Almgren et~al., 2005]{citeulike:4325901}
Almgren, R., Thum, C., Hauptmann, E., and Li, H. (2005).
\newblock {Direct Estimation of Equity Market Impact}.
\newblock {\em Risk}, 18:57--62.

\bibitem[Almgren and Chriss, 2000]{OPTEXECAC00}
Almgren, R.~F. and Chriss, N. (2000).
\newblock {Optimal execution of portfolio transactions}.
\newblock {\em Journal of Risk}, 3(2):5--39.

\bibitem[Bacry et~al., 2015]{citeulike:13497373}
Bacry, E., Iuga, A., Lasnier, M., and Lehalle, C.-A. (2015).
\newblock {Market Impacts and the Life Cycle of Investors Orders}.
\newblock {\em Market Microstructure and Liquidity}, 1(2).

\bibitem[Barles, 2006]{barles2006new}
Barles, G. (2006).
\newblock {A new stability result for viscosity solutions of nonlinear
  parabolic equations with weak convergence in time}.
\newblock {\em Comptes Rendus Mathematique}, 343(3):173--178.

\bibitem[Bensoussan et~al., 2013]{bensoussan2013mean}
Bensoussan, A., Frehse, J., and Yam, P. (2013).
\newblock {\em {Mean field games and mean field type control theory}}, volume
  101.
\newblock Springer.

\bibitem[Bensoussan et~al., 2016]{bensoussan2016linear}
Bensoussan, A., Sung, K. C.~J., Yam, S. C.~P., and Yung, S.~P. (2016).
\newblock {Linear-quadratic mean field games}.
\newblock {\em Journal of Optimization Theory and Applications},
  169(2):496--529.

\bibitem[Bernhard, 2003]{bernhard2003robust}
  Bernhard, P. (2003).
  \newblock {A robust control approach to option pricing}.
  \newblock{\em Applications of Robust Decision Theory and Ambiguity in Finance}, City University Press, London.

\bibitem[Bouchard et~al., 2011]{citeulike:5797837}
Bouchard, B., Dang, N.-M., and Lehalle, C.-A. (2011).
\newblock {Optimal control of trading algorithms: a general impulse control
  approach}.
\newblock {\em SIAM J. Financial Mathematics}, 2(1):404--438.

\bibitem[Brokmann et~al., 2014]{citeulike:13266538}
Brokmann, X., Serie, E., Kockelkoren, J., and Bouchaud, J.~P (2015).
\newblock {Slow decay of impact in equity markets}.
\newblock {\em Market Microstructure and Liquidity}, 1(2).

\bibitem[Caines, 2015]{caines2015mean}
Caines, P.~E. (2015).
\newblock {Mean field games}.
\newblock {\em Encyclopedia of Systems and Control}, pages 706--712.

\bibitem[Cardaliaguet and Hadikhanloo, 2017]{cardaliaguet2015learning}
Cardaliaguet, P. and Hadikhanloo, S. (2017).
\newblock {Learning in Mean Field Games: the Fictitious Play}.
%\newblock {\em arXiv preprint arXiv:1507.06280}.
\newblock {\em ESAIM: Control, Optimisation and Calculus of Variations}, 23(2):569--591

\bibitem[Carmona and Delarue, 2017]{carmonadelaruebook}
Carmona, R. and Delarue, F. (to appear).
\newblock {\em {Probabilistic Theory of Mean Field Games with Applications}}.
\newblock Springer.

\bibitem[Carmona et~al., 2013]{citeulike:13641571}
Carmona, R., Fouque, J.-P., and Sun, L.-H. (2013).
\newblock {Mean Field Games and Systemic Risk}.
\newblock {\em Social Science Research Network Working Paper Series}.

\bibitem[Cartea and Jaimungal, 2015]{citeulike:13587586}
Cartea, A. and Jaimungal, S. (2015).
\newblock {Incorporating Order-Flow into Optimal Execution}.
\newblock {\em Social Science Research Network Working Paper Series}.

\bibitem[Cartea et~al., 2015]{cartea15book}
Cartea, A., Jaimungal, S., and Penalva, J. (2015).
\newblock {\em {Algorithmic and High-Frequency Trading (Mathematics, Finance
  and Risk)}}.
\newblock Cambridge University Press, 1 edition.

\bibitem[Fudenberg and Levine, 1998]{fudenberg1998theory}
Fudenberg, D. and Levine, D.~K. (1998).
\newblock {\em {The theory of learning in games}}, volume~2.
\newblock MIT press.

\bibitem[Gomes and Others, 2014]{gomes2014mean}
Gomes, D.~A. and Others (2014).
\newblock {Mean field games models?a brief survey}.
\newblock {\em Dynamic Games and Applications}, 4(2):110--154.

\bibitem[Gomes et~al., 2014]{gomes2014existence}
Gomes, D.~A., Patrizi, S., and Voskanyan, V. (2014).
\newblock {On the existence of classical solutions for stationary extended mean
  field games}.
\newblock {\em Nonlinear Analysis: Theory, Methods \& Applications}, 99:49--79.

\bibitem[Gomes and Voskanyan, 2016]{gomes2016extended}
Gomes, D.~A. and Voskanyan, V.~K. (2016).
\newblock {Extended deterministic mean-field games}.
\newblock {\em SIAM Journal on Control and Optimization}, 54(2):1030--1055.

\bibitem[Gu\'{e}ant, 2016]{olivier16book}
Gu\'{e}ant, O. (2016).
\newblock {\em {The Financial Mathematics of Market Liquidity: From Optimal
  Execution to Market Making}}.
\newblock Chapman and Hall/CRC.

\bibitem[Gu{\'{e}}ant et~al., 2011]{gueant2011mean}
Gu{\'{e}}ant, O., Lasry, J.-M., and Lions, P.-L. (2011).
\newblock {Mean field games and applications}.
\newblock In {\em Paris-Princeton lectures on mathematical finance 2010}, pages
  205--266. Springer.

\bibitem[Gu\'{e}ant and Lehalle, 2015]{citeulike:13300035}
Gu\'{e}ant, O. and Lehalle, C.-A. (2015).
\newblock General intensity shapes in optimal liquidation.
\newblock {\em Mathematical Finance}, 25(3):457--495.

\bibitem[Guilbaud and Pham, 2013]{citeulike:13250978}
Guilbaud, F. and Pham, H. (2013).
\newblock {Optimal High Frequency Trading with limit and market orders}.
\newblock {Quantitative Finance}, 13(1):79--94.

\bibitem[Huang et~al., 2006]{huang2006large}
Huang, M., Malham{\'{e}}, R.~P., Caines, P.~E., and Others (2006).
\newblock {Large population stochastic dynamic games: closed-loop McKean-Vlasov
  systems and the Nash certainty equivalence principle}.
\newblock {\em Communications in Information \& Systems}, 6(3):221--252.

\bibitem[Jaimungal and Nourian, 2015]{citeulike:13586166}
Jaimungal, S. and Nourian, M. (2015).
\newblock {Mean-Field Game Strategies for a Major-Minor Agent Optimal Execution
  Problem}.
\newblock {\em Social Science Research Network Working Paper Series}.

\bibitem[Kizilkale and Caines, 2013]{KizilkaleCaines}
Kizilkale, A.~C. and Caines, P.~E. (2013).
\newblock {Mean field stochastic adaptive control}.
\newblock {\em IEEE Transactions on Automatic Control}, 58(4):905--920.

\bibitem[Lasry and Lions, 2006a]{citeulike:9313116}
Lasry, J.-M. and Lions, P.-L. (2006a).
\newblock {Jeux \`{a} champ moyen. I - Le cas stationnaire}.
\newblock {\em Comptes Rendus Mathematique}, 343(9):619--625.

\bibitem[Lasry and Lions, 2006b]{citeulike:9313152}
Lasry, J.-M. and Lions, P.-L. (2006b).
\newblock {Jeux \`{a} champ moyen. II - Horizon fini et contr\^{o}le optimal}.
\newblock {\em Comptes Rendus Mathematique}, 343(10):679--684.

\bibitem[Lasry and Lions, 2007]{citeulike:3614137}
Lasry, J.-M. and Lions, P.-L. (2007).
\newblock {Mean field games}.
\newblock {\em Japanese Journal of Mathematics}, 2(1):229--260.

\bibitem[Lehalle et~al., 2013]{citeulike:12047995}
Lehalle, C.-A., Laruelle, S., Burgot, R., Pelin, S., and Lasnier, M. (2013).
\newblock {\em {Market Microstructure in Practice}}.
\newblock World Scientific publishing.

\bibitem[Nourian et~al., 2012]{nourian2012mean}
Nourian, M., Caines, P.~E., Malham{\'{e}}, R.~P., and Huang, M. (2012).
\newblock {Mean field LQG control in leader-follower stochastic multi-agent
  systems: Likelihood ratio based adaptation}.
\newblock {\em IEEE Transactions on Automatic Control}, 57(11):2801--2816.

\bibitem[Obizhaeva and Wang, 2005]{ANYA05}
Obizhaeva, A. and Wang, J. (2005).
\newblock {Optimal Trading Strategy and Supply/Demand Dynamics}.
\newblock {\em Social Science Research Network Working Paper Series}.

\bibitem[Waelbroeck and Gomes, 2013]{citeulike:12825932}
Waelbroeck, H. and Gomes, C. (2013).
\newblock {Is Market Impact a Measure of the Information Value of Trades?
  Market Response to Liquidity vs. Informed Trades}.
%\newblock {\em Social Science Research Network Working Paper Series}.
\newblock {\em Journal of Financial Markets}, 16(1):1--32.

\end{thebibliography}
\end{document}